\documentclass[lettersize,journal]{IEEEtran}
\usepackage{amsmath,amsfonts}
\usepackage{amsthm}
\usepackage{amssymb}
\usepackage{algorithm}
\usepackage{algpseudocode}
\usepackage{array}
\usepackage[caption=false,font=normalsize,labelfont=sf,textfont=sf]{subfig}
\usepackage{textcomp}
\usepackage{stfloats}

\usepackage{cuted}
\usepackage{url}
\usepackage{verbatim}
\usepackage{graphicx}
\usepackage{cite}
\usepackage{mathrsfs}
\usepackage{textcomp}
\usepackage{footnote}
\usepackage{xcolor}
\usepackage{booktabs}
\usepackage{pifont}
\usepackage{multirow}
\usepackage{makecell} 
\usepackage{algpseudocode}
\usepackage{float}
\usepackage{ulem}
\urldef\baselinecode\url{https://kaimingshen.github.io/doc/asilomar_code.zip}

\newtheorem{theorem}{Theorem}
\newtheorem{proposition}{Proposition}
\newtheorem{lemma}{Lemma}

\newtheorem{remark}{Remark}
\newtheorem{corollary}{Corollary}
\theoremstyle{definition} 

\makeatletter
\renewcommand{\maketag@@@}[1]{\hbox{\m@th\normalsize\normalfont#1}}%

\makeatother
\algnewcommand\algorithmicoutput{\textbf{Output:}}
\algnewcommand\Output{\item[\algorithmicoutput]}

\begin{document}

\title{Rethinking Fractional Programming for Joint Uplink Scheduling and Power Control in Multicell Wireless Networks}

\author{\IEEEauthorblockN{Zihan Jiao, Xinping Yi, \textit{Member, IEEE}, Shi Jin, \textit{Fellow, IEEE}, and Giuseppe Caire, \textit{Fellow, IEEE}}

\thanks{Z. Jiao, X. Yi, and S. Jin are with the School of Information Science and Engineering, Southeast University, Nanjing 210096, China. Email:
\{jiao.zh, xyi, jinshi\}@seu.edu.cn.
}
\thanks{G. Caire is with the Faculty of Electrical Engineering and Computer Science, Technical University of Berlin, Germany. Email: 
caire@tu-berlin.de.}
}


\maketitle

\begin{abstract} This paper investigates the joint uplink scheduling and power control problem in a coordinated multicell wireless network, where at most one single-antenna user is allowed to access the single-antenna base station in each cell simultaneously. The resulting weighted sum-rate (WSR) maximization problem is a mixed discrete-continuous, nonconvex optimization problem that is notoriously difficult to solve directly. Classical fractional programming (FP) methods tackle this problem by leveraging the Lagrangian dual transform (LDT) followed by the quadratic transform (QT), yielding a tractable closed-form solution for scheduling and power control, with the LDT playing a crucial role in handling discrete variables. In this paper, we revisit the LDT from a minorization-maximization (MM) perspective and observe that its induced surrogate is somehow conservative due to the reciprocal-coordinate construction. Motivated by this observation, we propose a novel reciprocal-inversion transform (RIT) that constructs a tighter first-order Taylor expansion lower bound for the logarithmic rate function. The proposed RIT remains fully compatible with the QT, leading to a surrogate-enhanced FP (SEFP) algorithm for joint uplink scheduling and power control. 
The proposed SEFP algorithm retains the desirable per-cell separability of the classical FP framework and admits closed-form updates for the auxiliary variables, scheduling decisions, and transmit powers. Simulation results demonstrate that the SEFP algorithm consistently outperforms the classical FP method and other baselines for different network utilities.

\end{abstract}

\begin{IEEEkeywords}
Fractional programming (FP), weighted sum-rate, minorization-maximization (MM).
\end{IEEEkeywords}

\section{Introduction}
Coordinated multicell transmission has been widely recognized as one of the most effective interference management techniques for improving spectral efficiency in dense cellular networks. 
In multicell networks, aggressive frequency reuse makes the achievable user rates in different cells strongly coupled through intercell interference, so that resource allocation decisions can no longer be made independently on a per-cell basis \cite{Gesbert2011, Gesbert2007}. 
This coupling is particularly crucial in the uplink, where each scheduled user may create intercell interference to multiple neighboring base stations, such that the scheduling decision directly changes the network-wide interference pattern \cite{FPPart2}. 
Therefore, the joint design of user scheduling and transmit power control, formulated as a nonconvex mixed discrete-continuous nonlinear program, is a fundamental challenging problem for coordinated multicell uplink systems. Normally, user scheduling determines which users access the shared spectrum, while power control determines how aggressively the scheduled users transmit \cite{YuKwonShin2013}. 

In the literature, there are four pathways to the joint uplink scheduling and power control problem.
The first one is the power control-oriented methods, such as \cite{P2_powerorient_1,P2_powerorient_2}, which treat scheduling implicitly as continuous power optimization that maps zero and positive power allocations to unselected and scheduled users, respectively. 
However, such continuous formulations suffer from a severe drawback known as \textit{premature turning-off} \cite{FPPart2}. 
Specifically, if a communication link is deactivated in the early stages of the iterative optimization process, it can rarely be reactivated in subsequent iterations, which severely reduces the network's utility.
The second one is decomposition-oriented methods.
The main idea of this line of work is to decompose the original mixed discrete-continuous variables problem into more tractable subproblems, such as separate scheduling and power control updates or centralized and distributed resource management components, e.g., \cite{P2_decompositionorient_1,P2_decompositionorient_2,P2_decompositionorient_3}. Although these methods can significantly simplify the original problem, their algorithmic design is often problem-specific, and the resulting performance depends critically on the quality of the adopted decomposition, approximation, or interference treatment strategies.
The third one is artificial intelligence (AI)-aided methods, which have recently attracted increasing attention with the rapid development of AI techniques, e.g., \cite{P2_datadriven1_LORM,P2_datadriven3_gnnbased} to name a few. However, most existing approaches either unfold iterative algorithms with learnable parameters or directly approximate parts of the optimization procedure with neural networks. For this high communication-specific-domain mixed discrete-continuous problem, the solution space of network utility remains highly complex, so AI-aided methods have not yet shown a clear advantage over well-designed iterative algorithms, while their theoretical guarantees of generalization and interpretability are still unclear. 

Of particular relevance is the fourth pathway under the fractional programming (FP) framework, which is also the main focus of the current work.
The seminal FP framework in \cite{FPPart1,FPPart2} established a distinct optimization paradigm for communication network utility optimization problems involving logarithmic rate functions formed as $\Sigma_{i} \log(1+\text{SINR}_i)$.
In \cite{FPPart1}, two FP routes were developed for handling the weighted sum-rate objective. 
The first route is the direct FP approach, which applies the quadratic transform (QT) directly to the fractional signal-to-interference-plus-noise ratio (SINR) term within the logarithmic rate expression, typically leading to iterative updates that involve a series of convex subproblems.
In contrast, the second one, named the closed-form FP, first applies the Lagrangian dual transform (LDT) to move the SINR ratio outside the logarithm, and then uses the QT to decouple the resulting fractional term, yielding iterative closed-form updates for the three optimization variables (i.e., one original variable and two auxiliary variables).
As further clarified in \cite{FPPart2}, for purely continuous optimization problems, these two routes usually give comparable performance, with their main difference lying in computational complexity and update structure. 
However, for optimization problems involving discrete variables inside the logarithmic rate function, the closed-form FP route, especially the use of LDT, becomes indispensable, because it converts the log-ratio objective into a form compatible with subsequent QT-based decoupling (with different cells) and combinatorial scheduling optimization.

One may wonder what role the LDT plays and whether or not it is really indispensable.
Previous attempts include \cite{FPPart3} and \cite{BCA+MM WSR}, which reconsidered FP from an MM perspective and provided a clear theoretical view by showing that the equivalent transforms used in FP can be interpreted as surrogate function constructions.
This perspective makes it possible to examine the intrinsic quality of the lower bounds induced by the FP's transforms. 
Since then, many FP-related algorithmic developments have focused on algorithm acceleration, complexity reduction, and scenario extension, such as FastFP \cite{FastFP}, DeepFP \cite{DeepFP}, EGAT-FP \cite{EGAT-FP}, FP with stochastic CSI \cite{stochastic FP}, and for ISAC designs \cite{ISAC_FP}. 
These works significantly improve the efficiency and applicability of FP, yet they all retain the underlying LDT-plus-QT-based closed-form FP structure. 
Such an LDT-plus-QT framework is effective in high-dimensional continuous-variable problems, such as multicell MIMO beamforming, where the linear precoding and decoding often form a network utility optimization landscape in which FP-type iterations can already attain very strong stationary solutions in practice. 
However, for mixed user scheduling (cf. discrete variables) and power control/beamforming (cf. continuous variables) problems, the solution space becomes more complex and discontinuous due to the addition of discrete variables — the question then arises as to whether the LDT-plus-QT remains the indispensable solution.

Drawing on the insight from \cite{MM tutorial} that a tighter surrogate function can potentially enhance performance, we investigate the feasibility of constructing a more tightly bounded surrogate function.
More specifically, we first rethink the LDT from an MM perspective. Instead of using the Lagrangian dual theory, we reconstruct the LDT-type surrogate by variable substitution and tangent-envelope construction. 
Therefore, this paper improves the closed-form FP framework at the transform level by proposing a reciprocal-inversion transform (RIT), which preserves the desirable QT-compatibility and per-cell separability while providing a tighter surrogate for the logarithmic rate function. Specifically, the main contributions of this paper are summarized as follows.
\begin{itemize}
    \item We revisit the classical LDT from an MM perspective and identify its structural conservativeness. To overcome this limitation, we propose the RIT and combine it with the QT, yielding an equivalent SEFP reformulation with a tighter MM surrogate than the classical LDT-plus-QT formulation.
    \item We apply the proposed RIT-plus-QT framework to the joint uplink scheduling and power control problem in coordinated multicell networks. The resulting SEFP algorithm preserves per-cell separability and closed-form updates, while introducing a more adaptive scheduling metric than the classical FP algorithm, thereby facilitating more effective per-iteration scheduling and power-control updates.
    \item We conduct extensive simulations under multiple utility metrics and SNR regimes. The proposed SEFP algorithm consistently outperforms classical FP and other representative baselines under all considered settings, demonstrating robust gains and state-of-the-art empirical performance for the considered problem.
\end{itemize}

The remainder of this paper is organized as follows. 
Section II reviews the generic weighted sum-of-logarithms maximization problem and the classical closed-form FP framework based on the LDT-plus-QT, and then revisits the LDT from an MM surrogate perspective.
Section III develops the proposed RIT, presents its constructive derivation, and establishes the resulting RIT-plus-QT formulation as an SEFP framework.
Section IV applies the proposed framework to the joint uplink scheduling and power control problem in multicell networks, deriving closed-form updates and analyzing convergence and computational complexity.
Section V presents simulation results, and Section VI concludes the paper.

\textit{Notation:} Throughout this paper, scalar quantities are denoted by italic letters, while vectors, tuples, or collections of variables are denoted either by boldface lowercase letters or by explicitly defined collections, calligraphic letters denote sets, e.g., $\mathcal{X}$. 
The symbols $\mathbb{R}$, $\mathbb{R}{+}$ and $\mathbb{R}{++}$ denote the sets of real, nonnegative and positive real numbers, respectively.
For objective functions obtained through equivalent transforms, multiple arguments are separated by commas, e.g., $F_r(\mathbf{x}, \boldsymbol{\alpha})$ and $F_{rq}(\mathbf{x}, \boldsymbol{\alpha}, \mathbf{y})$, where all arguments are original optimization variables or auxiliary optimization variables of the equivalent reformulated problem.
For MM surrogate functions, a semicolon is used to separate the variable being optimized from the fixed reference point used to construct the surrogate, e.g., $\underline{F}_{\text{RIT}}(\mathbf{x}; \mathbf{x}^{(t)})$.
The superscript $(t)$ denotes the iteration index, while $(\cdot)^\star$ denotes an optimal solution or optimal auxiliary-variable update. 
The notation $|\cdot|$ denotes the cardinality of a set. The logarithm $\log(\cdot)$ is taken to be natural unless otherwise specified.

\section{Preliminaries}
\subsection{Weighted Sum-of-logarithms Problem and Classical Fractional Programming Framework}
Consider the following generic weighted sum-of-logarithms maximization problem
\begin{equation}\label{w_sol_p}
\max_{\mathbf{x} \in \mathcal{X}} \quad F(\mathbf{x}) \triangleq \sum_{m=1}^{M} \omega_m \log \left( 1 + \frac{A_m(\mathbf{x})}{B_m(\mathbf{x})} \right),\end{equation}
where $\mathbf{x}$ denotes the collection of optimization variables, 
$\mathcal X$ is the feasible set. 
For each index $m$, $\omega_m$ is a nonnegative weight, $A_m(\mathbf{x})\geq 0$ denotes the numerator function, and $B_m(\mathbf{x})> 0$ denotes the denominator function. 

In wireless communications, by treating interference as noise at the receiver, the signal-to-interference-and-noise ratio (SINR) is usually defined as $\text{SINR}=A_m(\mathbf{x})/B_m(\mathbf{x})$, where $A_m(\mathbf{x})$ and $B_m(\mathbf{x})$ represent the desired-signal power and the interference-plus-noise power, respectively. As such, \eqref{w_sol_p} is naturally a WSR maximization problem.

Fractional programming (FP) provides an effective reformulation framework for optimization problems involving such ratio structures.
A critical tool in the classical FP framework is the QT delineated in the following lemma, which decouples the numerator from the denominator in a fractional term.
\begin{lemma} (Quadratic Transform \cite[Theorem 1]{FPPart1}) \label{QTlemma}
For a ratio $A(\mathbf{x})/B(\mathbf{x})$, where $A(\mathbf{x}) \ge 0$ and $B(\mathbf{x}) > 0$, the QT gives
\begin{equation}
\frac{A(\mathbf{x})}{B(\mathbf{x})} = \max_{y \in \mathbb{R}} \left[ 2y\sqrt{A(\mathbf{x})} - y^2 B(\mathbf{x}) \right],  
\end{equation}
where the optimal auxiliary variable is
\begin{equation}
y^\star = \frac{\sqrt{A(\mathbf{x})}}{B(\mathbf{x})}.
\end{equation}
\end{lemma}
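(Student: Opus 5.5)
For fixed $\mathbf{x}$, the bracketed expression is a function of the single scalar $y$,
\[
g(y) \triangleq 2y\sqrt{A(\mathbf{x})} - y^2 B(\mathbf{x}),
\]
and the plan is to maximize it directly over $y \in \mathbb{R}$ by completing the square. Writing $a = \sqrt{A(\mathbf{x})} \ge 0$ (well defined because $A(\mathbf{x}) \ge 0$) and $b = B(\mathbf{x}) > 0$, we have
\[
g(y) = -b\left(y - \frac{a}{b}\right)^2 + \frac{a^2}{b}.
\]

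The key steps are as follows.

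\emph{Concavity.} Since $b>0$, the leading coefficient $-b$ is negative. So $g$ is a strictly concave quadratic, and its supremum over $\mathbb{R}$ is attained at a unique point.

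\emph{Maximizer and value.} From the completed square, the squared term is nonpositive and vanishes exactly at $y = a/b$. Hence $g(y) \le a^2/b$ for all $y\in\mathbb{R}$, with equality if and only if $y = a/b$. Equivalently, one can set the derivative $2a - 2by$ to zero to find the same point.

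\emph{Substitution.} Substituting back gives $a^2/b = A(\mathbf{x})/B(\mathbf{x})$ and $y^\star = \sqrt{A(\mathbf{x})}/B(\mathbf{x})$, which are both claims of Lemma~\ref{QTlemma}.

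There is no real obstacle here. The only points that need care are the hypotheses:
\begin{itemize}
\item $A(\mathbf{x}) \ge 0$ is what makes $\sqrt{A(\mathbf{x})}$ real.
\item $B(\mathbf{x}) > 0$ is what makes the quadratic strictly concave, so the maximum exists and is unique. If $B(\mathbf{x}) \le 0$ were allowed, $g$ would be unbounded above (or degenerate) and the identity would fail.
\end{itemize}
It is worth noting that the argument is pointwise in $\mathbf{x}$. The identity therefore holds for every $\mathbf{x}$, so maximizing jointly over $(\mathbf{x},y)$ is equivalent to maximizing the ratio over $\mathbf{x}$, which is how the transform is used later.
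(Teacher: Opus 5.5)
Your proof is correct and is essentially the standard argument. The paper does not reprove the QT but imports it from \cite[Theorem 1]{FPPart1}, where it is established the same way: the bracket is a strictly concave quadratic in $y$ when $B(\mathbf{x})>0$, the first-order condition gives $y^\star=\sqrt{A(\mathbf{x})}/B(\mathbf{x})$, and substituting back recovers $A(\mathbf{x})/B(\mathbf{x})$; your completing-the-square version is an equivalent, slightly more explicit form of that step that also shows the maximizer is unique.
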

The QT is especially useful for multi-ratio problems because it preserves the objective value equivalence after optimizing over the auxiliary variables. 
Therefore, a sum of ratios can be transformed term-by-term into a more tractable form, which enables efficient iterative optimization.

For WSR problems involving $\log (1+\text{SINR})$, classical FP methods follow two routes. 
The first route, named as the \textit{direct FP}, applies the QT directly to the SINR term inside the logarithm, which is more suitable for continuous optimization problems and leads to iterative updates through convex subproblems.
In this work, however, we focus specifically on the second route, namely the \textit{closed-form FP}. 
This route first leverages the LDT to move the SINR ratio outside the logarithm, and then applies the QT to the resulting formula with the ratio term. 
It is particularly effective for problems involving discrete variables, because it converts the logarithmic ratio objective into a form compatible with subsequent decoupling and combinatorial discrete variables' optimization.

\subsection{Closed-Form FP = LDT + QT} \label{Section2Subsection2}
We first review the LDT in the following lemma.
\begin{lemma} (Lagrangian Dual Transform \cite[Theorem 3]{FPPart2}) \label{LDTlemma}
By taking the LDT, the weighted sum-of-logarithms objective $F(\mathbf{x})$ in \eqref{w_sol_p} can be equivalently reformulated as
\begin{equation}
\max_{\substack{\mathbf{x} \in \mathcal{X}, \ \boldsymbol{\gamma} \in \mathbb{R}_+^M}}\quad F_\ell(\mathbf{x}, \boldsymbol{\gamma}), \end{equation}
where $\boldsymbol{\gamma}$ denotes the LDT-induced auxiliary variables tuple $(\gamma_1, \gamma_2, \dots, \gamma_M)$, and
\begin{equation}\label{eq:LDT_objective}
F_\ell(\mathbf{x}, \boldsymbol{\gamma})\triangleq\sum_{m=1}^M\omega_m\left[ \log(1 + \gamma_m) - \gamma_m + \frac{(1 + \gamma_m)A_m(\mathbf{x})}{A_m(\mathbf{x}) + B_m(\mathbf{x})} \right].
\end{equation}
For any fixed $\mathbf{x}$, the optimal LDT auxiliary variables admit the closed-form update
\begin{equation}\gamma_m^\star = \frac{A_m(\mathbf{x})}{B_m(\mathbf{x})}, \quad \forall m.
\end{equation}
\end{lemma}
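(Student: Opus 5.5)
The plan is to show that, for every fixed $\mathbf{x}\in\mathcal{X}$, maximizing $F_\ell(\mathbf{x},\boldsymbol{\gamma})$ over $\boldsymbol{\gamma}\in\mathbb{R}_+^M$ gives back exactly $F(\mathbf{x})$, and that the maximizer is $\gamma_m^\star=A_m(\mathbf{x})/B_m(\mathbf{x})$. Once this holds, the equivalence of the two problems follows from the standard identity $\max_{\mathbf{x},\boldsymbol{\gamma}}F_\ell=\max_{\mathbf{x}}\max_{\boldsymbol{\gamma}}F_\ell=\max_{\mathbf{x}}F(\mathbf{x})$. In addition, any optimal $\mathbf{x}$ of one problem, paired with $\boldsymbol{\gamma}^\star$, is optimal for the other.

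For fixed $\mathbf{x}$ the objective separates across $m$, and each weight satisfies $\omega_m\ge 0$. It therefore suffices to study a single scalar function. Write $A=A_m(\mathbf{x})\ge 0$, $B=B_m(\mathbf{x})>0$, and $r=A/(A+B)\in[0,1)$, and define
\begin{equation*}
g(\gamma)=\log(1+\gamma)-\gamma+(1+\gamma)r .
\end{equation*}
On $\gamma>-1$ this function is strictly concave, because $g''(\gamma)=-1/(1+\gamma)^2<0$. Its derivative is $g'(\gamma)=1/(1+\gamma)-1+r$. Setting $g'(\gamma)=0$ gives $1+\gamma=1/(1-r)=(A+B)/B$, so $\gamma^\star=A/B\ge 0$. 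This stationary point lies in $\mathbb{R}_+$, so by concavity it is the global maximizer over $\mathbb{R}_+$. If $\omega_m=0$, any $\gamma_m$ is optimal, and choosing $A/B$ is consistent with the statement.

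Next I substitute $\gamma^\star$ back into $g$. Using $1+\gamma^\star=(A+B)/B$, I get
\begin{equation*}
g(\gamma^\star)=\log\frac{A+B}{B}-\frac{A}{B}+\frac{A+B}{B}\cdot\frac{A}{A+B}=\log\!\left(1+\frac{A}{B}\right).
\end{equation*}
Summing over $m$ with the weights $\omega_m$ then yields $\max_{\boldsymbol{\gamma}}F_\ell(\mathbf{x},\boldsymbol{\gamma})=F(\mathbf{x})$.

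The argument is short, and none of its steps is a real obstacle. The only points that need care are that the stationary point lies in the feasible set $\mathbb{R}_+$, and that the value identity is an exact cancellation. A secondary subtlety is that exchanging the two maximizations requires the maxima to be attained. This is guaranteed for the inner problem by the closed form $\gamma^\star$; for the outer problem it is only needed if $F$ itself attains its supremum on $\mathcal{X}$. Otherwise the equivalence is stated in terms of suprema.
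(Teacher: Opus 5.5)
Your proof is correct. Strict concavity of $g$ on $(-1,\infty)$, the feasible stationary point $\gamma^\star=A/B\ge 0$, and the exact cancellation $g(\gamma^\star)=\log(1+A/B)$ together give $\max_{\boldsymbol{\gamma}}F_\ell(\mathbf{x},\boldsymbol{\gamma})=F(\mathbf{x})$. Your remarks on $\omega_m=0$ and on exchanging the two maximizations are also sound.

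The paper does not prove this lemma itself; it imports it from the cited Part~II paper. Its own proof of the RIT analogue (Theorem~1) uses exactly your pattern: set the derivative in the auxiliary variable to zero, then substitute back. Section~II-C, however, re-derives the LDT by a genuinely different route. It substitutes $z=1/(1+r)$ and applies the tangent inequality of the convex function $-\log z$ at $\bar z=1/(1+\gamma_m)$. This gives, for every $\gamma_m\ge 0$,
\[
\log\bigl(1+r_m(\mathbf{x})\bigr)\ \ge\ \log(1+\gamma_m)-\gamma_m+\frac{(1+\gamma_m)A_m(\mathbf{x})}{A_m(\mathbf{x})+B_m(\mathbf{x})},
\]
with equality iff $\gamma_m=r_m(\mathbf{x})$, by strict convexity.

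This envelope argument delivers $F_\ell\le F$ globally and identifies the maximizer in a single inequality. It needs no differentiation in $\gamma$ and no check that a stationary point lies in $\mathbb{R}_+$. It also explains where $F_\ell$ comes from and exposes the MM-surrogate structure the paper later exploits. It is the template for the RIT construction in Section~III-B, where concavity of $h(\tau)$ plays the role that concavity in the auxiliary variable plays in your argument.

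Your calculus route is more elementary and self-contained, requiring only one-variable calculus on an explicitly concave function. Its limitation is that it verifies a given formula rather than deriving it.
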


The key effect of the LDT is that the ratio term is moved outside the logarithm. 
For fixed $\boldsymbol{\gamma}$, the $\mathbf{x}$-dependent fractional part in $F_\ell(\mathbf{x}, \boldsymbol{\gamma})$ becomes
$$\frac{\omega_m(1 + \gamma_m)A_m(\mathbf{x})}{A_m(\mathbf{x}) + B_m(\mathbf{x})},\quad m=1,...,M, $$
which is directly compatible with the QT.
Therefore, for a fixed $\boldsymbol{\gamma}$, the QT can be directly applied to each fractional term in $F_\ell(\mathbf{x}, \boldsymbol{\gamma})$.
Thus, combining the LDT and the QT, \eqref{w_sol_p} can be further equivalently reformulated as
\begin{equation} \label{Flq} 
\begin{split}
 \max_{\substack{ \mathbf{x} \in \mathcal{X}\\ \boldsymbol{\gamma} \in \mathbb{R}_{+}^M \\ \mathbf{y} \in \mathbb{R}^M}}&\ F_{\ell q}(\mathbf{x}, \boldsymbol{\gamma}, \mathbf{y})\triangleq\sum_{m=1}^{M} \bigg[  2y_m \sqrt{\omega_m(1 + \gamma_m)A_m(\mathbf{x})}  \\ & -y_m^2 \big(A_m(\mathbf{x}) + B_m(\mathbf{x})\big)+\omega_m\big(\log(1 + \gamma_m) - \gamma_m\big) \bigg], 
\end{split}
\end{equation}
where $\mathbf{y}$ denotes the QT-induced auxiliary variables tuple $({y}_1, {y}_2, \dots, {y}_M)$, with the optimal closed-form update
\begin{equation}
y_m^\star = \frac{\sqrt{\omega_m(1 + \gamma_m)A_m(\mathbf{x})}}{A_m(\mathbf{x}) + B_m(\mathbf{x})},\quad m=1,...,M.
\end{equation}
The classical closed-form FP framework updates the QT auxiliary variable $\mathbf{y}$, the LDT auxiliary variable $\boldsymbol{\gamma}$, and the original optimization variable $\mathbf{x}$ alternatively, since $F_{\ell q}(\mathbf{x}, \boldsymbol{\gamma}, \mathbf{y})$ is convex respect to each variable when the other two are fixed.
Notably, for mixed continuous-discrete variables optimization problems, this LDT$+$QT structure is especially valuable.
With the auxiliary variables fixed, the transformed objective becomes decomposable with respect to different discrete decision blocks, so that the discrete variables can be updated independently or through simplified combinatorial subproblems.
This property is essential for extending the closed-form FP framework from purely continuous optimization to mixed continuous-discrete settings.

\subsection{Structural Limitation of the LDT} \label{Section2subsection3}
As previously shown, the LDT is an effective transform for converting the weighted sum-of-logarithms objective into a QT-compatible fractional form.
Further to the Lagrangian duality argument in \cite{FPPart2}, recent studies in \cite{BCA+MM WSR} and \cite{FPPart3} deepen the understanding of the LDT by interpreting it as a means of constructing surrogate functions under the MM framework. 
Nevertheless, both \cite{BCA+MM WSR} and \cite{FPPart3} rely entirely on established LDT results to construct the surrogate function.

In what follows, we revisit the LDT from a constructive perspective, employing two standard surrogate construction techniques—variable substitution and first-order tangent minorization—both of which are foundational in the MM framework \cite{MM tutorial}.
Remarkably, such a perspective reveals that the LDT can be equivalently interpreted as a first-order tangent lower bound of a surrogate function constructed with variable substitution of the original logarithmic function.
Whilst this construction maintains the desired QT-compatible fractional structure, it exposes several intrinsic restrictions of the resulting surrogate constructions.

\subsubsection{Revisiting LDT from a Surrogate Perspective}
To make the discussion precise, we first introduce the ratio
\begin{equation}\label{rmx_definition}
r_m(\mathbf{x}) \triangleq \frac{A_m(\mathbf{x})}{B_m(\mathbf{x})}, \quad m = 1, \dots, M.\end{equation}
Since $A_m(\mathbf{x}) \geq 0$ and $B_m(\mathbf{x}) > 0$, we have $r_m(\mathbf{x}) \geq 0$. For the purpose of single-ratio analysis, define the unweighted logarithmic sub-function
\begin{equation} \label{fr}
f(r) \triangleq \log(1 + r), \quad r \geq 0.
\end{equation}
Then, the $m$-th logarithmic term in $F(\mathbf{x})$ can be written as $f(r_m(\mathbf{x}))$. 
In the following discussion, $r$ denotes a generic ratio variable, while $\bar{r}$ denotes a fixed reference point.

We now reconstruct the LDT surrogate from a variable-substitution perspective. Consider the substitution
\begin{equation}z = \frac{1}{1 + r}.\end{equation}
Then $f(r)=  -\log z.$ 
Since $-\log z$ is convex over $z>0$, its first-order tangent approximation at any reference point $\bar z$ gives a global lower bound 
\begin{equation} \label{1storder_app_oflogz}
    -\log z \geq -\log \bar z - \frac{1}{\bar z}(z - \bar z).
\end{equation}
Substituting $z = {1}/{(1+r)}$ and $\bar z = {1}/{(1+\bar r)}$ into \eqref{1storder_app_oflogz}, we obtain the LDT surrogate of $f(r)$, expressed as
\begin{equation} \label{surrogate_of_fr}
\log(1 + r) \geq \log(1 + \bar r) - \bar r + \frac{(1 + \bar r)r}{1 + r}\triangleq \ell_{\mathrm{LDT}}(r;\bar r).
\end{equation}

We now return to the original objective $F(\mathbf{x})$, at the $t$-th MM iteration, let $\mathbf{x}^{(t)}$ denote the current iterate, and the reference ratio for the $m$-th logarithmic term be
\begin{equation} \label{ratio for t-th iter}
\bar r_m^{(t)}\triangleq r_m(\mathbf{x}^{(t)}) = \frac{A_m(\mathbf{x}^{(t)})}{B_m(\mathbf{x}^{(t)})}.
\end{equation}
Applying the above single-ratio surrogate to each logarithmic term yields the LDT-induced MM surrogate of $F(\mathbf{x})$, i.e.,
\begin{equation} \label{F_LDT_surrogate}
\begin{aligned}
\underline{F}_{\text{LDT}}&(\mathbf{x}; \mathbf{x}^{(t)}) \triangleq \sum_{m=1}^{M} \omega_m \ell_{\text{LDT}}\left(r_m(\mathbf{x}); \bar{r}_m^{(t)}\right) \\
&= \sum_{m=1}^{M} \omega_m \left[ \log(1 + \bar{r}_m^{(t)}) - \bar{r}_m^{(t)} + \frac{(1 + \bar{r}_m^{(t)})A_m(\mathbf{x})}{A_m(\mathbf{x}) + B_m(\mathbf{x})} \right],
\end{aligned}
\end{equation}
which satisfies
\begin{equation}
\underline{F}_{\mathrm{LDT}} ( \mathbf{x} ; \mathbf{x}^{(t)} ) \leq F(\mathbf{x}), \quad \forall \mathbf{x} \in \mathcal{X},
\end{equation}
and it is tight at the current iterate
\begin{equation}
\underline{F}_{\mathrm{LDT}} \left.( \mathbf{x} ; \mathbf{x}^{(t)} )\right|_{\mathbf{x}=\mathbf{x}^{(t)}} = \left.F(\mathbf{x})\right|_{\mathbf{x}=\mathbf{x}^{(t)}}.
\end{equation}  

\subsubsection{Structural Limitations of the LDT Surrogate}
The above constructive interpretation, however, shows that the tangent operation is not performed in the original coordinate $r$, but in the substituted coordinate $z={1}/{(1+r)}.$
This mapping compresses the entire interval $r\in[0,\infty)$ into $z\in(0,1]$. Although this compression is essentially leading to the QT-compatible fractional structure, it also makes the resulting surrogate conservative. The following proposition summarizes three representative structural limitations by comparing $f(r)$ with its corresponding LDT surrogate $\ell_{\mathrm{LDT}}(r;\bar r)$.
\begin{proposition} (Structural limitations of the LDT surrogate)
For any fixed reference point $\bar r>0$, we have
 \begin{enumerate}
     \item $\ell_{\mathrm{LDT}}(r;\bar r)$ does not preserve the exact boundary behavior of $f(r)$ at $r=0$. Since,
     $$f(r)\left.\right|_{r=0} = 0.$$
    whereas
    $$
    \ell_{\mathrm{LDT}}(0; \bar{r}) = \log(1+\bar{r}) - \bar{r} < 0.$$  
    \item Although $\ell_{\mathrm{LDT}}(r;\bar r)$ and $f(r)$ are first-order consistent, its local curvature is more conservative. 
    Specifically,
    $$\ell''_{\text{LDT}}(\bar{r}; \bar{r}) = 2f''(\bar{r})<0.$$
    \item The LDT surrogate exhibits a growth mismatch in the high-ratio regime. As $r \to \infty$, 
    $$f(r) \to \infty. $$
    whereas,
    $$
    \ell_{\mathrm{LDT}}(r; \bar{r}) \to \log(1 + \bar{r}) + 1 < \infty. 
    $$
 \end{enumerate}
\end{proposition}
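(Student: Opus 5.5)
All three claims follow by direct computation from the closed form \eqref{surrogate_of_fr}. I would first rewrite the surrogate as
\[
\ell_{\mathrm{LDT}}(r;\bar r)=\log(1+\bar r)-\bar r+(1+\bar r)\Big(1-\frac{1}{1+r}\Big),
\]
so that its only $r$-dependence is through $g(r)\triangleq -\frac{1}{1+r}$. Then $\ell'_{\mathrm{LDT}}(r;\bar r)=\frac{1+\bar r}{(1+r)^2}$ and $\ell''_{\mathrm{LDT}}(r;\bar r)=-\frac{2(1+\bar r)}{(1+r)^3}$. For comparison, $f'(r)=\frac{1}{1+r}$ and $f''(r)=-\frac{1}{(1+r)^2}$. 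With these formulas in hand, each item is a one-line check.

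\textbf{Item 1.} Setting $r=0$ gives $\ell_{\mathrm{LDT}}(0;\bar r)=\log(1+\bar r)-\bar r$, while $f(0)=0$. To show the surrogate value is negative, I would use the strict inequality $\log(1+\bar r)<\bar r$ for $\bar r>0$. This follows from strict concavity of $\log(1+\cdot)$: its tangent line at $0$ is $\bar r$, and a strictly concave function lies strictly below its tangent away from the point of tangency. Equivalently, $h(s)=s-\log(1+s)$ has $h(0)=0$ and $h'(s)=\frac{s}{1+s}>0$ for $s>0$.

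\textbf{Item 2.} I would first record first-order consistency. Evaluating at $r=\bar r$ gives $\ell_{\mathrm{LDT}}(\bar r;\bar r)=\log(1+\bar r)=f(\bar r)$ and $\ell'_{\mathrm{LDT}}(\bar r;\bar r)=\frac{1}{1+\bar r}=f'(\bar r)$. For the curvature, $\ell''_{\mathrm{LDT}}(\bar r;\bar r)=-\frac{2}{(1+\bar r)^2}=2f''(\bar r)<0$. Since the surrogate is twice as negatively curved as $f$ at the reference point, it bends away from $f$ faster, which is the claimed conservativeness.

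\textbf{Item 3.} As $r\to\infty$, $\frac{r}{1+r}\to 1$, so $\ell_{\mathrm{LDT}}(r;\bar r)\to\log(1+\bar r)-\bar r+(1+\bar r)=\log(1+\bar r)+1$, which is finite. Meanwhile $f(r)=\log(1+r)\to\infty$. The surrogate is also monotone increasing, since its derivative is positive, so this limit is its supremum.

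There is no real obstacle here. The only step that needs an argument beyond direct evaluation is the strict inequality $\log(1+\bar r)<\bar r$ in Item 1, and the concavity or monotonicity argument above settles it. The rest is careful differentiation of the rational term $\frac{(1+\bar r)r}{1+r}$, which is why I rewrite it as $(1+\bar r)\big(1-\frac{1}{1+r}\big)$ before differentiating.
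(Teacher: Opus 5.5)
Your proposal is correct and follows essentially the same route as the paper's Appendix A: evaluate at $r=0$ and use $\log(1+\bar r)<\bar r$, compare $\ell''_{\mathrm{LDT}}(r;\bar r)=-2(1+\bar r)/(1+r)^3$ with $f''(r)=-1/(1+r)^2$ at $r=\bar r$, and take the limit $r\to\infty$. Your explicit check of first-order consistency at $\bar r$ and the monotonicity remark in Item 3 are small additions that the paper leaves implicit.
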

\begin{proof}
    See Appendix A.
\end{proof}

It is worth noting that the above limitations does not invalidate LDT, which remains an elegant and useful transformation to convert each logarithmic ratio term into a tractable QT-compatible fractional form. 
However, Proposition 1 shows that the LDT-induced surrogate construction is intrinsically tied to a particular reciprocal-coordinate substitution $z=1/(1+r)$. 
Once this construction is adopted, the resulting fractional structure is also fixed.
Specifically, the $\mathbf{x}$-dependent ratio structure in \eqref{F_LDT_surrogate} is always
\begin{equation} \label{LDT_structure}
\frac{A_m(\mathbf{x})}{A_m(\mathbf{x})+B_m(\mathbf{x})},\end{equation}
where $A_m(\mathbf{x})$ and $B_m(\mathbf{x})$ enter the denominator with fixed equal weighting. 
The reference ratio $\bar r_m^{(t)}$ only changes the outer coefficient and the additive constant of the surrogate, but cannot adjust the relative weighting between $A_m(\mathbf{x})$ and $B_m(\mathbf{x})$ inside the denominator.
This structural restriction motivates us to develop a more flexible surrogate construction in the next section.

\section{Proposed Reciprocal-Inversion Transform and its Combining With the Quadratic Transform} \label{Section3}
In this section, we develop a novel Reciprocal-Inversion Transform (RIT) and establish its compatibility with the QT.
Specifically, in Section~\ref{Section3subsection1}, we first present the RIT as an equivalent transform for the weighted sum-of-logarithms objective. 
Then, in Section~\ref{Section3subsection2}, we reinterpret the RIT from the MM perspective through the corresponding Reciprocal-Inversion surrogate function. 
By detailing its constructive derivation, we clarify how the proposed surrogate is obtained and how it remedies the structural limitations of the LDT surrogate discussed in Section~\ref{Section2subsection3}. 
Finally, in Section~\ref{Section3subsection3}, we combine the proposed RIT with the QT and prove that the resulting RIT-plus-QT surrogate is strictly tighter than the LDT-plus-QT surrogate while preserving the QT-compatible fractional structure required by the closed-form FP framework.

\subsection{The Reciprocal-Inversion
Transform (RIT)} \label{Section3subsection1}
To simplify the notation, for any $v>0$, define
\begin{equation} \label{operator_b}
b(v)\triangleq (1 + v)\log(1+v) - v,
\end{equation} 
and
\begin{equation} \label{operator_c}
c(v)\triangleq \left( 1 + v \right) \log^2(1+v).
\end{equation}

\begin{theorem} (Reciprocal-Inversion Transform) \label{RIT}
The weighted sum-of-logarithms problem \eqref{w_sol_p} can be equivalently reformulated as
\begin{equation} \label{Fr_whole_problem}
\begin{aligned}
   &\max \quad
   F_r(\mathbf{x},\boldsymbol{\alpha}) \\
   &\quad\mathrm{s.t.} \quad \mathbf{x} \in \mathcal{X}, \quad{\boldsymbol{\alpha} \in \mathbb{R}_{++}^M},
\end{aligned}
\end{equation}
where $\boldsymbol{\alpha}=(\alpha_1, \alpha_2, \dots, \alpha_M)$ denotes the auxiliary variables tuple introduced by the RIT, and 
\begin{equation}\label{eq:RIT_objective}
F_r(\mathbf{x},\boldsymbol{\alpha})\triangleq\sum_{m=1}^{M} \omega_m \frac{c(\alpha_m) A_m(\mathbf{x})}{\alpha_m^2 B_m(\mathbf{x}) + b(\alpha_m) A_m(\mathbf{x})}.
\end{equation}

For any fixed $\mathbf{x}$, when $A_m(\mathbf{x})/B_m(\mathbf{x})>0$, the optimal auxiliary variable is given in closed-form by
\begin{equation} \label{optimal_RIT_av_update}
    \alpha_m^\star = \frac{A_m(\mathbf{x})}{B_m(\mathbf{x})}, \quad m = 1, \dots, M.
\end{equation}
When $A_m(\mathbf{x})/B_m(\mathbf{x}) = 0$, the corresponding $m$-th term of $F_r(\mathbf{x},\boldsymbol{\alpha})$ equals zero for any $\alpha_m > 0$, and hence exactly matches the $m$-th term of $F(\mathbf{x})$.

\end{theorem}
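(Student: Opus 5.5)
Since $F_r$ is a sum of terms in which each $\alpha_m$ appears only in the $m$-th summand, I will reduce the claim to a single-ratio statement. With $r = A_m(\mathbf{x})/B_m(\mathbf{x})$ and $B_m(\mathbf{x})>0$, dividing numerator and denominator of the $m$-th term by $B_m(\mathbf{x})$ gives
\begin{equation*}
g(\alpha; r) \triangleq \frac{c(\alpha)\, r}{\alpha^2 + b(\alpha)\, r}, \qquad \alpha>0 .
\end{equation*}
It then suffices to show that $\max_{\alpha>0} g(\alpha;r) = \log(1+r)$, attained at $\alpha = r$, for every $r>0$. Joint maximization over $(\mathbf{x},\boldsymbol{\alpha})$ then coincides with maximizing $F(\mathbf{x})$, because $\max_{\mathbf{x},\boldsymbol{\alpha}} = \max_{\mathbf{x}} \max_{\boldsymbol{\alpha}}$ and the inner maximization separates over $m$ with nonnegative weights $\omega_m$. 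The case $r=0$ is immediate: the numerator vanishes while the denominator $\alpha^2>0$, so the term is $0=\log 1$ for every $\alpha>0$.

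\textbf{Value at $\alpha=r$.} Write $L=\log(1+r)$. Then $c(r) r = (1+r)L^2 r$, and
\begin{equation*}
r^2 + b(r) r = r\big[r + (1+r)L - r\big] = r(1+r)L ,
\end{equation*}
so $g(r;r)=L$.

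\textbf{Upper bound.} The inequality $g(\alpha;r)\le L$ is equivalent to
\begin{equation*}
c(\alpha)\, r \le L\alpha^2 + L\, b(\alpha)\, r ,
\end{equation*}
since the denominator is positive; this uses $b(\alpha)>0$, which holds because $(1+v)\log(1+v)>v$ for $v>0$. Set $\ell=\log(1+\alpha)$. Substituting the definitions and rearranging, the claim becomes
\begin{equation*}
\Phi(\alpha) \triangleq L\alpha^2 - L\alpha r + r(1+\alpha)\,\ell\,(L-\ell) \ge 0 .
\end{equation*}
Here I would switch to the logarithmic coordinates $s=\ell$ and $L$, with $\alpha=e^s-1$ and $r=e^L-1$, and try to exhibit $\Phi$ as a sum of manifestly nonnegative terms, or as $(L-s)^2$ times a positive factor, since $\Phi$ vanishes at $s=L$.

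\textbf{Fallback: tangent-line (MM) route.} An equivalent and perhaps cleaner route runs through the MM derivation promised in Section~\ref{Section3subsection2}. Substitute $u = 1/\log(1+r)$, so that $f = 1/u$ is convex in $u>0$. The tangent bound $1/u \ge 2/\bar u - u/\bar u^2$ alone does not yield the fractional form, so I expect the intended construction to be a tangent of a convex function in a reciprocal coordinate $w = 1/r$, namely $h(w) = 1/\log(1+1/w)$. The plan is to check that this $h$ is convex, take its tangent at $\bar w = 1/\alpha$, and verify that inverting the tangent bound $h(w)\ge h(\bar w)+h'(\bar w)(w-\bar w)$ reproduces exactly $\log(1+r) \ge c(\alpha) r/(\alpha^2 + b(\alpha) r)$. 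Indeed, $h'(\bar w)$ involves $1/\big((1+\alpha)\log^2(1+\alpha)\big)$, which matches $c$. Inverting is legitimate when both sides are positive, and the tangent value is $[\alpha^2 + b(\alpha) r]/\big(c(\alpha)\, r\big)>0$.

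\textbf{Main obstacle.} The hard step is the convexity of $h(w)=1/\log(1+1/w)$ on $w>0$, or equivalently the nonnegativity of $\Phi$. I would prove it by computing $h''$, reducing its sign to a one-variable inequality in $t = \log(1+1/w)$, and verifying that inequality by elementary derivative arguments.
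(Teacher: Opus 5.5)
Your architecture is sound, and in one respect more complete than the paper's own proof. The paper only sets $\partial/\partial\alpha_m$ to zero and substitutes back. You correctly note that attainment at $\alpha=r$ must be paired with the uniform bound $g(\alpha;r)\le\log(1+r)$ for all $\alpha>0$, i.e.\ $\Phi\ge 0$. That bound is exactly the RIT minorization derived in Section~\ref{Section3subsection2}. Your algebra for $g(r;r)=L$, for $\Phi$, and for the tangent value $[\alpha^2+b(\alpha)r]/(c(\alpha)r)$ is correct.

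The gap is in how you propose to prove that bound. Your primary route stops at ``try to exhibit $\Phi$ as\ldots'', so it is not yet an argument. Your fallback has the sign wrong twice.

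First, $h(w)=1/\log(1+1/w)$ is not convex. A direct computation gives
\[
h''(w)=\frac{2-(2w+1)\log(1+1/w)}{w^2(w+1)^2\log^3(1+1/w)}<0 .
\]
With $u=1/w$, the numerator is negative iff $\log(1+u)>2u/(2+u)$. This holds for $u>0$, since the difference vanishes at $u=0$ and has derivative $u^2/\big((1+u)(2+u)^2\big)>0$. So the step you flag as the main obstacle, proving convexity via $h''$, would simply fail.

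Second, even if $h$ were convex, inverting $h(w)\ge T(w)>0$ gives $\log(1+r)=1/h(w)\le 1/T(w)$. That is the reverse of what you need.

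The correct mechanism is the paper's (Proposition~\ref{h(t)_pro} and the derivation after it). Since $h$ is strictly concave, its tangent $T$ at $\bar w=1/\alpha$ is a global \emph{upper} bound, $0<h(w)\le T(w)$. Both sides are positive, so taking reciprocals reverses the inequality and yields
\[
\log(1+r)\ge \frac{c(\alpha)r}{\alpha^2+b(\alpha)r}\quad\text{for all } r>0 .
\]
This is precisely $\Phi\ge 0$, with equality at $\alpha=r$. Accordingly, $\Phi\ge0$ for all $(\alpha,r)$ is equivalent to the tangent-upper-bound property of $h$, i.e.\ to concavity, not convexity; your ``equivalently'' in the last paragraph needs that correction. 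With these two sign fixes, your attainment computation plus this bound gives a complete proof that $\alpha_m^\star=r_m$ is the global maximizer over $\alpha_m>0$.
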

The two problems are equivalent in the sense that $\mathbf{x}$ is the solution to \eqref{w_sol_p} if and only if it is the solution to \eqref{Fr_whole_problem}, so that optimizing the RIT-reformulated objective $F_r(\mathbf{x},\boldsymbol{\alpha})$ in \eqref{Fr_whole_problem} over $\boldsymbol{\alpha}$ exactly recovers the original objective $F(\mathbf{x})$ in \eqref{w_sol_p}.

\begin{proof}
Observing that $F_r(\mathbf{x},\boldsymbol{\alpha})$ is separable with respect to $\{\alpha_m\}_{m=1}^M$ and differentiable over $\boldsymbol{\alpha}$ when $\mathbf{x}$ is held fixed, so each auxiliary variable can be optimally determined in an independent way.

Following the ratio definition $r_m(\mathbf{x})$ in \eqref{rmx_definition}, the case $r_m(\mathbf{x})=0$ is well-stated in Theorem~\ref{RIT}.
To proof the case $r_m(\mathbf{x})>0$, by setting the derivative of the $m$-th term of $F_r(\mathbf{x},\boldsymbol{\alpha})$ with respect to $\alpha_m$ to zero, we obtain \eqref{optimal_RIT_av_update}. Substituting this $\alpha_m^\star$ back into the $m$-th term of the RIT-reformulated objective $F_r(\mathbf{x},\boldsymbol{\alpha})$ gives
\begin{equation}
    \frac{c(r_m(\mathbf{x}))r_m(\mathbf{x})}{r_m^2(\mathbf{x}) + b(r_m(\mathbf{x}))r_m(\mathbf{x})} = \log(1 + r_m(\mathbf{x}))=f(r_m(\mathbf{x})).
\end{equation}
Multiplying by $\omega_m$ and summing over all $m=1,...,M$ exactly recovers the original weighted sum-of-logarithms objective $F(\mathbf{x}).$ 
This completes the proof.\end{proof}

\subsection{Constructive Derivation} \label{Section3subsection2}
To provide insights on how the RIT in Theorem~\ref{RIT} is obtained, 
we leverage the same analytical methodology as in Section~\ref{Section2subsection3}, and revisit $F(\mathbf{x})$ from the surrogate construction perspective under the MM framework.
Following the definition $r_m(\mathbf{x})$ in \eqref{rmx_definition} and the unweighted logarithmic sub-function $f(r)$ in \eqref{fr}, let $r$ denote a generic ratio variable, and $\bar{r} > 0$ denote a fixed reference point.

First, introduce a different variable substitution operation as
$$ \tau = \frac{1}{r},\quad r>0.$$
Note that the following construction is first carried out for $r>0$, and the resulting surrogate will then be extended to $r=0$ by continuity arguments.

    It is worth noting that,
    in contrast to the LDT variable substitution $z=1/(1+r)$, which directly couples the numerator and denominator through the shifted ratio $1+r$, the RIT variable substitution $\tau=1/r$ inverts the original ratio itself.
    As such, when substituting $r_m(\mathbf{x})=A_m(\mathbf{x})/B_m(\mathbf{x})$, the RIT does not force $A_m(\mathbf{x})$ and $B_m(\mathbf{x})$ combined with fixed equal weighting. 
    From an SINR perspective, the desired-signal power and the interference-plus-noise power are therefore not necessarily coupled.
    This provides the structural flexibility to overcome the LDT surrogate's limitation discussed in Section~\ref{Section2subsection3}.

Then, inspired by the surrogate construction in \cite{MM tutorial} that manipulates the objective function to utilize the hidden concavity, we consider the following reciprocal function
\begin{equation}
h(\tau) = \frac{1}{\log(1 + 1/\tau)}, \quad \tau > 0.
\end{equation}
\begin{proposition} \label{h(t)_pro}
    $h(\tau)$ is strictly concave on $\tau>0.$
\end{proposition}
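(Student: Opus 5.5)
The plan is to verify strict concavity directly by showing $h''(\tau)<0$ for every $\tau>0$. Writing $h = 1/g$ with $g(\tau)\triangleq\log(1+1/\tau)$ keeps the algebra manageable. Since $g(\tau)>0$ on $\tau>0$, $h$ is twice differentiable there, and the sign of $h''$ will reduce to a one-variable logarithmic inequality.

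\textbf{Step 1: derivatives of $g$.} Write $g(\tau)=\log(\tau+1)-\log\tau$. Then
\begin{equation*}
g'(\tau)=\frac{1}{\tau+1}-\frac{1}{\tau}=-\frac{1}{\tau(\tau+1)},\qquad g''(\tau)=\frac{1}{\tau^2}-\frac{1}{(\tau+1)^2}=\frac{2\tau+1}{\tau^2(\tau+1)^2}.
\end{equation*}

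\textbf{Step 2: second derivative of $h$.} From $h=1/g$ we get $h'=-g'/g^2$ and
\begin{equation*}
h''(\tau)=\frac{2\,g'(\tau)^2-g(\tau)\,g''(\tau)}{g(\tau)^3}.
\end{equation*}
Because $g^3>0$, $h''<0$ holds if and only if $2g'^2<g\,g''$. Substituting Step 1 and cancelling the common positive factor $\tau^{-2}(\tau+1)^{-2}$, this becomes
\begin{equation*}
\log\!\left(1+\frac{1}{\tau}\right)>\frac{2}{2\tau+1}.
\end{equation*}
With $u=1/\tau>0$, this is equivalent to $\log(1+u)>\frac{2u}{2+u}$.

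\textbf{Step 3: the key inequality.} This step is the only real content, although it is classical (a Pad\'e-type lower bound for the logarithm). Define $\varphi(u)\triangleq\log(1+u)-\frac{2u}{2+u}$. Then $\varphi(0)=0$ and
\begin{equation*}
\varphi'(u)=\frac{1}{1+u}-\frac{4}{(2+u)^2}=\frac{u^2}{(1+u)(2+u)^2},
\end{equation*}
which is strictly positive for $u>0$. Hence $\varphi(u)>0$ for all $u>0$.

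It follows that $h''(\tau)<0$ on $\tau>0$, so $h$ is strictly concave there. I do not expect any genuine obstacle; the only care needed is the numerator simplification in $\varphi'$, namely $(2+u)^2-4(1+u)=u^2$.
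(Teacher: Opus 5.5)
Your proof is correct and follows the paper's route: the paper also computes $h''(\tau)=\frac{2-(2\tau+1)\log(1+1/\tau)}{\tau^2(\tau+1)^2[\log(1+1/\tau)]^3}$ and reduces strict concavity to the inequality $(2\tau+1)\log(1+1/\tau)>2$. The paper only asserts that inequality, whereas your Step 3 proves it by showing $\varphi(u)=\log(1+u)-\frac{2u}{2+u}$ is increasing with $\varphi(0)=0$, which fills that gap.
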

\begin{proof}
    The second derivative of $h(\tau)$ on the domain of definition is
\begin{equation}
h''(\tau) = \frac{2 - (2\tau + 1) \log(1 + 1/\tau)}{\tau^2(\tau + 1)^2 [\log(1 + 1/\tau)]^3}.
\end{equation}
Since
\begin{equation}
(2\tau + 1) \log(1 + 1/\tau) > 2, \quad \tau > 0,
\end{equation}
we have
\begin{equation}
h''(\tau) < 0.
\end{equation}
Therefore, $h(\tau)$ is strictly concave on $\tau > 0$. 
\end{proof}
By Proposition~\ref{h(t)_pro}, the first-order tangent of $h(\tau)$ gives a global upper bound.\footnote{We adhere to the core rationale of constructing tangent operation, as the resulting surrogate function is desired to retain a form amenable to the QT framework. For $\log$-type functions, a first-order tangent approximation yields the most natural formulation to achieve this objective.}
Since $f(r) = 1/h(\tau)$, this upper bound of $h(\tau)$ can be converted into a lower bound surrogate of $f(r)$.
Let the reference point in the reciprocal-inversion coordinate be $\bar{\tau}=1/\bar r$.
Then, for any $\tau>0$, we have
\begin{equation}
    h(\tau) \le h(\bar{\tau}) + h'(\bar{\tau})(\tau - \bar{\tau}).
\end{equation}
Substituting $\tau =1/r$ and $\bar{\tau}=1/\bar r$, and using $f(r)=\log(1+r)$, we obtain
\begin{equation}
    \frac{1}{f(r)} \le \frac{1}{f(\bar{r})} + \frac{\bar{r}^2}{(1 + \bar{r})f^2(\bar{r})} \left( \frac{1}{r} - \frac{1}{\bar{r}} \right).
\end{equation}

Since both sides are positive, taking reciprocals reverses the inequality and yields
\begin{equation}
    f(r) \geq  \frac{c(\bar{r})r}{\bar{r}^2 + b(\bar{r})r}, \quad r > 0,
\end{equation}
where the $b(\cdot)$ and $c(\cdot)$ are defined in \eqref{operator_b} and \eqref{operator_c}, respectively.
Moreover, the right-hand side is well defined at $r=0$ by continuity and equals zero, which coincides with $f(0)$.
Therefore, the above lower bound naturally extends to the whole domain $r\ge0.$

\begin{corollary} (The RIT Surrogate for $f(r)$) \label{f(rm)RITsurrogate}
 Define 
 \begin{equation}
\ell_{\textnormal{RIT}}(r; \bar{r}) \triangleq \frac{c(\bar{r})r}{\bar{r}^2 + b(\bar{r})r},\ r \geq 0,\bar r>0.
\end{equation}
It follows that  $ \ell_{\textnormal{RIT}}(r; \bar{r})$ is a first-order tight lower bound surrogate of $f(r)$ over $r \geq 0$.
\end{corollary}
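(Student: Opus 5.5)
The claim has three parts: $\ell_{\textnormal{RIT}}(r;\bar r)\le f(r)$ for all $r\ge 0$, equality at $r=\bar r$, and matching first derivatives at $r=\bar r$.

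\textbf{Global minorization.} For $r>0$, I would reuse the chain just derived. By Proposition~\ref{h(t)_pro}, $h$ is concave on $\tau>0$, so $h$ lies below its tangent at $\bar\tau=1/\bar r$. Substituting $\tau=1/r$ gives
$$\frac{1}{f(r)}\le \frac{\bar r^2+b(\bar r)r}{c(\bar r)r}.$$
The one point to check carefully is that the right-hand side is positive, so that taking reciprocals is legitimate. It suffices that $b(\bar r)>0$ for $\bar r>0$. This holds because $b(0)=0$ and $b'(v)=\log(1+v)>0$. Since also $c(\bar r)>0$, the right-hand side is a positive multiple of a ratio of positive quantities. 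Inverting then yields $f(r)\ge \ell_{\textnormal{RIT}}(r;\bar r)$ on $r>0$.

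\textbf{The boundary $r=0$.} At $r=0$ we have $\ell_{\textnormal{RIT}}(0;\bar r)=0=f(0)$ directly, since the denominator $\bar r^2>0$. So the bound holds on all of $r\ge0$.

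\textbf{Tightness at $r=\bar r$.} Plugging in gives
$$\ell_{\textnormal{RIT}}(\bar r;\bar r)=\frac{c(\bar r)\bar r}{\bar r^2+b(\bar r)\bar r}.$$
The denominator simplifies as $\bar r+b(\bar r)=(1+\bar r)\log(1+\bar r)$. The ratio then reduces to $\log(1+\bar r)=f(\bar r)$, which is the same computation as in the proof of Theorem~\ref{RIT}.

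\textbf{First-order agreement.} I would avoid differentiating the fraction directly and instead argue structurally. Define $g(r)=f(r)-\ell_{\textnormal{RIT}}(r;\bar r)$. It is differentiable on $r>0$, nonnegative by the first step, and zero at the interior point $\bar r>0$. So $\bar r$ is an interior minimizer of $g$, which forces $g'(\bar r)=0$, i.e.\ $\ell'_{\textnormal{RIT}}(\bar r;\bar r)=f'(\bar r)=1/(1+\bar r)$.

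As a sanity check, the derivative can also be computed explicitly:
$$\ell'_{\textnormal{RIT}}(\bar r;\bar r)=\frac{c(\bar r)\bar r^2}{\bigl(\bar r^2+b(\bar r)\bar r\bigr)^2}=\frac{(1+\bar r)\log^2(1+\bar r)}{(1+\bar r)^2\log^2(1+\bar r)}=\frac{1}{1+\bar r}.$$
Alternatively, one can note that a tangent of $h$ at $\bar\tau$ composed with the smooth maps $\tau=1/r$ and $u\mapsto 1/u$ preserves first-order contact.

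The only mildly delicate step is the positivity of $b(\bar r)$, which justifies the reciprocal inversion. Everything else is direct.
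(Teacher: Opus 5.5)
Your proposal is correct and follows essentially the paper's route: the global bound comes from the tangent upper bound on the concave $h$ in the coordinate $\tau=1/r$ followed by reciprocal inversion, which the paper derives in the text just before the corollary, and Appendix~B then checks $\ell_{\textnormal{RIT}}(0;\bar r)=0$, $\ell_{\textnormal{RIT}}(\bar r;\bar r)=f(\bar r)$, and the derivative match $1/(1+\bar r)$ by direct computation. Your Fermat-type derivation of the derivative match (interior minimum of $f-\ell_{\textnormal{RIT}}$ at $\bar r$) is a valid shortcut, and the positivity check on $b(\bar r)$ is harmless but not needed, since the tangent bound already gives $0<1/f(r)\le$ right-hand side.
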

\begin{proof}
    See Appendix~\ref{proof_of_corr1}. 
\end{proof}
We now return to the original weighted sum-of-logarithms problem in \eqref{w_sol_p}. Let $\mathbf{x}^{(t)}$ denote the current iterate, and the reference ratio for the $m$-th logarithmic term be same with \eqref{ratio for t-th iter}. Applying Corollary \ref{f(rm)RITsurrogate} to the $m$-th logarithmic term of $F(\mathbf{x})$ yields
\begin{equation}
\log \left( 1 + \frac{A_m(\mathbf{x})}{B_m(\mathbf{x})} \right) \ge \frac{c(\bar{r}_m^{(t)}) A_m(\mathbf{x})}{(\bar{r}_m^{(t)})^2 B_m(\mathbf{x}) + b(\bar{r}_m^{(t)}) A_m(\mathbf{x})}.
\end{equation}

Therefore, $F(\mathbf{x})$ admits the following RIT-induced MM lower-bound surrogate
\begin{equation} \label{F_RIT_surrogate}
\begin{aligned}
\underline{F}_{\text{RIT}}(\mathbf{x} ; \mathbf{x}^{(t)})  &\triangleq \sum_{m=1}^{M} \omega_m \ell_{\text{RIT}}\left(r_m(\mathbf{x}); \bar{r}_m^{(t)}\right) \\
&=\sum_{m=1}^{M} \omega_m \frac{c(\bar{r}_m^{(t)}) A_m(\mathbf{x})}{(\bar{r}_m^{(t)})^2 B_m(\mathbf{x}) + b(\bar{r}_m^{(t)}) A_m(\mathbf{x})},
\end{aligned}
\end{equation}
which satisfies
\begin{equation}
\underline{F}_{\text{RIT}}(\mathbf{x} ; \mathbf{x}^{(t)}) \le F(\mathbf{x}), \quad \forall \mathbf{x} \in \mathcal{X},
\end{equation}
and the equality holds at the current iterate, i.e.,
\begin{equation}
\underline{F}_{\text{RIT}}\left.( \mathbf{x} ; \mathbf{x}^{(t)} )\right|_{\mathbf{x}=\mathbf{x}^{(t)}} = \left.F(\mathbf{x})\right|_{\mathbf{x}=\mathbf{x}^{(t)}}.
\end{equation}
As such, it demonstrates that the RIT in Theorem~\ref{RIT} originates from a first-order tight MM surrogate constructed in the reciprocal-inversion coordinate.

Compared with the LDT surrogate of $F(\mathbf{x})$ in \eqref{F_LDT_surrogate}, the RIT surrogate \eqref{F_RIT_surrogate} gives the fractional structure
\[
\frac{c(\bar{r}_m)A_m(\mathbf{x})}{\bar{r}_m^2 B_m(\mathbf{x}) + b(\bar{r}_m)A_m(\mathbf{x})},
\]
where the \textit{unequal weighting} of $A_m(\mathbf{x})$ and $B_m(\mathbf{x})$ inside the denominator is controlled by the reference ratio $\bar{r}_m$. This is fundamentally different from the LDT-induced structure in \eqref{LDT_structure},
remedying the structural limitations of the LDT surrogate while retaining a QT-compatible fractional form. Interestingly, such an unequal weighting will be shown to be particularly effective in the joint uplink scheduling and power control problem as in Section~\ref{Section4}.


\subsection{Surrogate-Enhanced FP = RIT + QT} \label{Section3subsection3}
By combining the proposed RIT with the QT, we propose a new FP framework, termed as Surrogate-Enhanced FP (SEFP), where the RIT retains the advantage of the LDT with each logarithmic ratio converted into a fractional form that can be further handled by the QT, as reviewed in Section~\ref{Section2Subsection2}. 
Therefore, the RIT remains fully compatible with the closed-form FP solution framework. 

Specifically, from the RIT-equivalent formulation in \eqref{Fr_whole_problem} and \eqref{eq:RIT_objective}, for any fixed $\boldsymbol{\alpha}$, the $m$-th term of $F_r(\mathbf{x},\boldsymbol{\alpha})$ is a valid fractional term with a nonnegative numerator and a positive denominator. Therefore, the QT in Lemma~\ref{QTlemma} can be directly applied to each RIT-induced fractional term. This leads to the following RIT+QT equivalent reformulation.
\begin{theorem}(Surrogate-Enhanced FP) \label{theorem_RIT_QT}
The generic weighted sum-of-logarithms problem \eqref{w_sol_p} is equivalent to
\begin{equation} \label{Frq}
\begin{aligned}
  & \max \quad
   F_{rq}(\mathbf{x}, \boldsymbol{\alpha}, \mathbf{y}) = \sum_{m=1}^{M}  \Bigg[ 2y_m \sqrt{\omega_m c(\alpha_m) A_m(\mathbf{x})}  \\
& \quad\quad\quad\quad\quad\quad\quad - y_m^2 \Big( \alpha_m^2 B_m(\mathbf{x}) + b(\alpha_m) A_m(\mathbf{x}) \Big) \Bigg], \\
   &\quad\mathrm{s.t.} \quad \mathbf{x} \in \mathcal{X}, \quad{\boldsymbol{\alpha} \in \mathbb{R}_{++}^M},\quad\mathbf{y}\in \mathbb{R}^M.
\end{aligned}
\end{equation}
Here, $\mathbf{y} = (y_1, \dots, y_M)$ denotes the auxiliary variables tuple introduced by the QT. For any fixed $(\mathbf{x}, \boldsymbol{\alpha})$, the optimal QT auxiliary variable is given in closed form by
\begin{equation}
y_m^\star = \frac{\sqrt{\omega_m c(\alpha_m) A_m(\mathbf{x})}}{\alpha_m^2 B_m(\mathbf{x}) + b(\alpha_m) A_m(\mathbf{x})}, \quad m = 1, \dots, M.
\end{equation}
The auxiliary variables $\boldsymbol{\alpha}$ are the identical RIT-induced auxiliary variables as those in Theorem~\ref{RIT}.
\end{theorem}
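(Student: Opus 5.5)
The proof chains Theorem~\ref{RIT} with Lemma~\ref{QTlemma}, applied term by term. The precise claim is that
\[
\max_{\boldsymbol{\alpha}\in\mathbb{R}_{++}^M,\ \mathbf{y}\in\mathbb{R}^M} F_{rq}(\mathbf{x},\boldsymbol{\alpha},\mathbf{y}) = F(\mathbf{x})
\]
holds for every fixed $\mathbf{x}\in\mathcal{X}$. Joint maximization over $(\mathbf{x},\boldsymbol{\alpha},\mathbf{y})$ can be carried out as an inner maximization over $(\boldsymbol{\alpha},\mathbf{y})$ followed by an outer maximization over $\mathbf{x}$. Hence this identity shows that the maximizers in $\mathbf{x}$ of \eqref{Frq} and of \eqref{w_sol_p} coincide, and that the optimal values agree.

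\textbf{Step 1 (maximize over $\mathbf{y}$).} Fix $\mathbf{x}$ and $\boldsymbol{\alpha}$. For each $m$, set
\[
\tilde A_m = \omega_m c(\alpha_m) A_m(\mathbf{x}), \qquad \tilde B_m = \alpha_m^2 B_m(\mathbf{x}) + b(\alpha_m) A_m(\mathbf{x}).
\]
Clearly $\tilde A_m \ge 0$, because $\omega_m \ge 0$, $c(v)>0$ for $v>0$, and $A_m\ge 0$. Checking $\tilde B_m>0$ needs $b(v)\ge 0$ for $v>0$. This follows from $b(0)=0$ and $b'(v)=\log(1+v)>0$. Combined with $\alpha_m^2 B_m(\mathbf{x})>0$, it gives $\tilde B_m>0$.

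The objective is separable in $y_m$, so Lemma~\ref{QTlemma} applies to each term. It gives
\[
\max_{y_m} \bigl[2y_m\sqrt{\tilde A_m} - y_m^2 \tilde B_m\bigr] = \tilde A_m/\tilde B_m,
\]
attained at $y_m^\star=\sqrt{\tilde A_m}/\tilde B_m$, which is exactly the stated update. Summing over $m$ yields $\max_{\mathbf{y}} F_{rq}(\mathbf{x},\boldsymbol{\alpha},\mathbf{y}) = F_r(\mathbf{x},\boldsymbol{\alpha})$, with $\omega_m$ correctly absorbed under the square root.

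\textbf{Step 2 (maximize over $\boldsymbol{\alpha}$).} By Theorem~\ref{RIT}, $\sup_{\boldsymbol{\alpha}} F_r(\mathbf{x},\boldsymbol{\alpha}) = F(\mathbf{x})$. The supremum is attained at $\alpha_m^\star=r_m(\mathbf{x})$ whenever $r_m(\mathbf{x})>0$. When $r_m(\mathbf{x})=0$, the term is identically zero for every $\alpha_m>0$. It remains to confirm that $\alpha_m^\star$ is a global maximizer and not merely a stationary point. Corollary~\ref{f(rm)RITsurrogate} gives this: $\ell_{\mathrm{RIT}}(r;\bar r)\le f(r)$ for every $\bar r>0$, so each term of $F_r$ is bounded above by $\omega_m f(r_m(\mathbf{x}))$, with equality at $\bar r = r_m(\mathbf{x})$. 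The variables $\boldsymbol{\alpha}$ in \eqref{Frq} are therefore the same as in Theorem~\ref{RIT}.

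\textbf{Step 3 (combine).} The order of maximizations may be swapped, since $\max_{\boldsymbol{\alpha},\mathbf{y}} = \max_{\boldsymbol{\alpha}}\max_{\mathbf{y}}$. Chaining Steps 1 and 2 gives the displayed identity for each $\mathbf{x}$, and equivalence of the two problems follows.

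The only point needing care is the global optimality of $\alpha_m^\star$ in Step 2. The stationarity argument in the proof of Theorem~\ref{RIT} alone does not establish it, so I plan to rely on the lower-bound property in Corollary~\ref{f(rm)RITsurrogate}. Verifying $b\ge0$, and hence $\tilde B_m>0$ as the QT requires, is routine.
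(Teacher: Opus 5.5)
Your proof is correct and follows the paper's route. You apply the QT termwise to get $\max_{\mathbf{y}} F_{rq}(\mathbf{x},\boldsymbol{\alpha},\mathbf{y}) = F_r(\mathbf{x},\boldsymbol{\alpha})$ and then invoke Theorem~\ref{RIT} to conclude $\max_{\boldsymbol{\alpha},\mathbf{y}} F_{rq} = F(\mathbf{x})$. Your two extra checks are valid and supply details the paper's short proof leaves implicit: $b(v)>0$ guarantees the QT denominator is positive, and the global minorization $\ell_{\mathrm{RIT}}(r;\bar r)\le f(r)$ of Corollary~\ref{f(rm)RITsurrogate} (which comes from the tangent bound on the concave $h$) shows that $\alpha_m^\star$ is a global maximizer, not merely a stationary point.
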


\begin{proof}
Since the RIT and the QT are rigorously equivalent transforms,  \eqref{Frq} can be directly obtained by performing the QT on \eqref{Fr_whole_problem}, that is,
\begin{equation}
\max_{\mathbf{y} \in \mathbb{R}^M} F_{rq}(\mathbf{x}, \boldsymbol{\alpha}, \mathbf{y}) = F_r(\mathbf{x}, \boldsymbol{\alpha}),
\end{equation}
and hence,
\begin{equation}
\max_{\boldsymbol{\alpha} \in \mathbb{R}_{++}^M, \mathbf{y} \in \mathbb{R}^M} F_{rq}(\mathbf{x}, \boldsymbol{\alpha}, \mathbf{y}) = F(\mathbf{x}).
\end{equation}
Therefore, applying the RIT followed by the QT provides an equivalent reformulation of the original weighted sum-of-logarithms problem. The proof is complete.
\end{proof}
\begin{proposition} (Tightness of the SEFP Surrogate)\label{RITtightproposition}
 From the MM perspective, when the respective auxiliary variables are optimally updated at the same current iterate, the RIT+QT formulation yields a uniformly tighter surrogate of the original weighted sum-of-logarithms objective $F(\mathbf{x})$ in \eqref{w_sol_p} than the classical LDT+QT formulation.
\end{proposition}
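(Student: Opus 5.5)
The plan is to reduce the statement to a pointwise comparison of the two single-ratio surrogates $\ell_{\mathrm{RIT}}(r;\bar r)$ and $\ell_{\mathrm{LDT}}(r;\bar r)$ at a common reference point $\bar r>0$.

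\textbf{Step 1: reduction to the single-ratio surrogates.} Fix the current iterate $\mathbf{x}^{(t)}$ and update the auxiliary variables optimally there. By Lemma~\ref{LDTlemma} and Theorem~\ref{RIT} this gives $\gamma_m^{(t)}=\alpha_m^{(t)}=\bar r_m^{(t)}$. The QT is exact once $\mathbf{y}$ is maximized (Lemma~\ref{QTlemma}, Theorem~\ref{theorem_RIT_QT}). Hence the $\mathbf{x}$-surrogates produced by the two formulations are
\[
\max_{\mathbf{y}}F_{\ell q}(\mathbf{x},\boldsymbol\gamma^{(t)},\mathbf{y})=\underline{F}_{\mathrm{LDT}}(\mathbf{x};\mathbf{x}^{(t)}),\qquad \max_{\mathbf{y}}F_{rq}(\mathbf{x},\boldsymbol\alpha^{(t)},\mathbf{y})=\underline{F}_{\mathrm{RIT}}(\mathbf{x};\mathbf{x}^{(t)}).
\]
Both are weighted sums, with the same nonnegative weights $\omega_m$, of $\ell_{\mathrm{LDT}}(r_m(\mathbf{x});\bar r_m^{(t)})$ and $\ell_{\mathrm{RIT}}(r_m(\mathbf{x});\bar r_m^{(t)})$ respectively. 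It therefore suffices to prove
\[
\ell_{\mathrm{RIT}}(r;\bar r)\ge \ell_{\mathrm{LDT}}(r;\bar r)\quad\text{for all } r\ge 0,\ \bar r>0,
\]
with equality only at $r=\bar r$. If $\bar r_m^{(t)}=0$, the RIT term vanishes identically by Theorem~\ref{RIT}, and I would handle this case separately by direct comparison with $\ell_{\mathrm{LDT}}(r;0)=r/(1+r)$.

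\textbf{Step 2: the pointwise inequality via a polynomial argument.} Both surrogates are rational in $r$:
\[
\ell_{\mathrm{LDT}}(r;\bar r)=\log(1+\bar r)-\bar r+\frac{(1+\bar r)r}{1+r},\qquad \ell_{\mathrm{RIT}}(r;\bar r)=\frac{c(\bar r)r}{\bar r^2+b(\bar r)r}.
\]
I would put the difference over the common denominator $D(r)=(\bar r^2+b(\bar r)r)(1+r)$. This is positive for $r\ge0$, since $b(v)>0$ for $v>0$. The numerator $N(r)$ is then a polynomial of degree at most $2$ in $r$.

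Both surrogates equal $f(\bar r)$ at $r=\bar r$ and have derivative $f'(\bar r)$ there, by their tangent constructions (Corollary~\ref{f(rm)RITsurrogate} and \eqref{surrogate_of_fr}). So the difference has a double zero at $\bar r$. Since $D(\bar r)\neq0$, $N$ also has a double root at $\bar r$, giving $N(r)=K(r-\bar r)^2$ for some constant $K$.

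\textbf{Step 3: the sign of $K$.} I would evaluate at $r=0$. There $\ell_{\mathrm{RIT}}(0;\bar r)=0$, while $\ell_{\mathrm{LDT}}(0;\bar r)=\log(1+\bar r)-\bar r<0$ by Proposition~1, and $D(0)=\bar r^2>0$. This gives $K\bar r^2=N(0)>0$, so $K>0$. The inequality holds on all of $r\ge0$ and is strict for $r\neq\bar r$. Summing over $m$ with weights $\omega_m\ge0$ yields $\underline{F}_{\mathrm{LDT}}\le\underline{F}_{\mathrm{RIT}}\le F$, with equality at $\mathbf{x}^{(t)}$.

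\textbf{Main obstacle.} The algebra is easy once the double-root observation is in place. The delicate part is stating precisely what ``surrogate'' means at the QT level, because the algorithm actually fixes $\mathbf{y}$ at its optimum for $\mathbf{x}^{(t)}$. My first attempt would be to interpret the proposition through the $\mathbf{y}$-maximized surrogates above. If the comparison with $\mathbf{y}$ frozen is also required, I would check it termwise using $y^\star=\sqrt{a_0}/d_0$: the frozen QT bound of a ratio $a/d$ is $(2\sqrt{a a_0}d_0-a_0 d)/d_0^2$. In that case I would compare the two expressions directly, with $a=\omega(1+\bar r)A$, $d=A+B$ for LDT and $a=\omega c(\bar r)A$, $d=\bar r^2B+b(\bar r)A$ for RIT. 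I expect this frozen-$\mathbf{y}$ comparison is where the argument may need extra care or may fail to be uniform.
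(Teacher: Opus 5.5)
There is a genuine gap: the inequality you prove is not the one the proposition asserts. The surrogate attached to the RIT+QT (resp.\ LDT+QT) formulation is obtained by freezing \emph{all} of its auxiliary variables, including the QT variables $\mathbf{y}$, at their optimal values for $\mathbf{x}^{(t)}$. This is what ``optimally updated at the same current iterate'' refers to, it is the function Algorithm~1 actually maximizes, and it is what the paper's proof compares.

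Your Step~1 instead maximizes $\mathbf{y}$ out for every $\mathbf{x}$. The QT then drops out, and you end up comparing $\underline{F}_{\mathrm{RIT}}$ with $\underline{F}_{\mathrm{LDT}}$. Steps~2--3 prove that comparison correctly for $\bar r>0$; in fact $\ell_{\mathrm{RIT}}-\ell_{\mathrm{LDT}}=(\bar r-\log(1+\bar r))(r-\bar r)^2/[(\bar r^2+b(\bar r)r)(1+r)]$. But each frozen-$\mathbf{y}$ surrogate lies below its own $\mathbf{y}$-maximized version. So $\underline{F}_{\mathrm{LDT}}\le\underline{F}_{\mathrm{RIT}}$ says nothing about the order of the two frozen surrogates, and the converse implication fails too. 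The frozen comparison, which is the actual content of the proposition, is left as an unexecuted plan, and your suspicion that it may not be uniform is unfounded.

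Your degenerate case is also mishandled. Theorem~\ref{RIT} says the RIT term vanishes where $A_m(\mathbf{x})=0$, not identically. Moreover $c(\alpha)<\alpha^2$ for every $\alpha>0$ (with $s=\log(1+\alpha)$ this is $s<2\sinh(s/2)$). Hence $\ell_{\mathrm{RIT}}(r;\alpha_m)$ falls below $\ell_{\mathrm{LDT}}(r;0)=r/(1+r)$ for small $r>0$, so under your reading the claim fails at $\bar r_m^{(t)}=0$. Under the frozen reading, both $y_m^\star$ vanish there and both terms are identically zero.

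The missing step is short. Apply your frozen-QT expression with
\[
\bar r=\bar r_m^{(t)}>0,\qquad \rho=\sqrt{A_m(\mathbf{x})/A_m(\mathbf{x}^{(t)})},\qquad \beta=B_m(\mathbf{x})/B_m(\mathbf{x}^{(t)}),
\]
and use the identities $c(\bar r)=(\bar r+b(\bar r))\log(1+\bar r)$ and $\bar r^2-b(\bar r)=(1+\bar r)(\bar r-\log(1+\bar r))$. After dividing by $\omega_m$, the LDT+QT and RIT+QT terms become
\[
\log(1+\bar r)-\bar r+2\bar r\rho-\frac{\bar r^2\rho^2+\bar r\beta}{1+\bar r}
\quad\text{and}\quad
2\rho\log(1+\bar r)-\frac{b(\bar r)\rho^2+\bar r\beta}{1+\bar r},
\]
respectively. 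The $\beta$-terms cancel, and the difference is $(\bar r-\log(1+\bar r))(\rho-1)^2\ge 0$. This is exactly the paper's perfect-square identity. The paper writes $\rho^2$ as $r_m(\mathbf{x})/\bar r_m$, which implicitly takes $\beta=1$, but the cancellation shows this is harmless. Summing with $\omega_m\ge 0$ completes the proof.
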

\begin{proof}
See Appendix~\ref{proof_of_RITtight}.
\end{proof}

Theorem~\ref{theorem_RIT_QT} and Proposition~\ref{RITtightproposition} clarify the role of the RIT in the proposed SEFP framework, which is competitive to the closed-form FP.
Specifically, Theorem~\ref{theorem_RIT_QT} shows the RIT preserves the QT-compatible fractional structure. 
Proposition~\ref{RITtightproposition} further establishes that, under the same MM interpretation, the RIT+QT  provides a more faithful minorization surrogate of the original weighted sum-of-logarithms objective than the LDT+QT counterpart in classical closed-form FP.

\section{Surrogate-Enhanced FP for Joint Uplink Scheduling and Power Control}
\label{Section4}

\subsection{System Model and Problem Formulation}
We consider the uplink of a coordinated multicell wireless network that consists of a set of base 
stations (BSs) $\mathcal{B}$ with single antenna equipped at each BS. Let $\mathcal{K}$ denote the user set in the network and $\mathcal{K}_i$ denote the set of users associated with BS $i \in \mathcal{B}$. 
In each scheduling interval, at most one user is scheduled in each cell with a single-antenna BS. The scheduling decision made by BS $i$ is denoted by
\begin{equation}
s_i \in \mathcal{K}_i \cup \{\varnothing\},
\end{equation}
where $s_i = \varnothing$ means that no user is scheduled in cell $i$, indicating that the cell $i$ is switched off. 
Let $p_k$ denote the transmit power of user $k$, subject to
\begin{equation}
0 \leq p_k \leq P_{\max}.
\end{equation}

The uplink channel coefficient from user $k$ to BS $i$ is denoted by $h_{i,k}$, and the noise power is $\sigma^2$. 
For a given scheduling vector $\mathbf{s} = (\{s_i\}_{i \in \mathcal{B}})$ and power vector $\mathbf{p}=(\{p_k\}_{k \in \mathcal{K}})$, by treating interference as noise at the receiver, the WSR maximization problem with joint uplink scheduling and power control is then formulated as
\begin{subequations} \label{eq:main_problem} 
\begin{align}
\max_{\mathbf{s},\mathbf{p}} \quad & W_{o}(\mathbf{s},\mathbf{p}) = \sum_{i \in \mathcal{B}} \omega_{s_i} \log \left( 1 + \frac{|h_{i,s_i}|^2 p_{s_i}}{\sum_{j \neq i} |h_{i,s_j}|^2 p_{s_j} + \sigma^2} \right) \label{eq:objective} \\ 
&\text{s.t.} \quad \quad 0 \leq p_k \leq P_{\max}, \quad \forall k, \label{eq:constraint_p} \\ 
& \quad\quad \quad s_i \in \mathcal{K}_i \cup \{\varnothing\}, \quad \forall i \in \mathcal{B}. \label{eq:constraint_s} 
\end{align}
\end{subequations}

This is a nonconvex optimization problem with mixed discrete-continuous variables. The discrete scheduling variables play a crucial role in affecting the intercell interference pattern. 
Even for a fixed scheduling, the resulting power control subproblem remains nonconvex. 
This is precisely the difficulty emphasized in the original closed-form FP-based uplink joint scheduling and precoding framework \cite{FPPart2}, where the LDT and the QT are combined to move away from the logarithmic function and decouple scheduling and power variables across cells.

\subsection{SEFP-Based Joint Scheduling and Power Control Algorithm}

In this subsection, we specialize the proposed SEFP framework to the joint uplink scheduling and power control problem.
By applying the RIT in Theorem~\ref{RIT} to the original WSR problem \eqref{eq:main_problem}, $W_{o}(\mathbf{s},\mathbf{p})$ can then be reformulated as \eqref{f_d} on the top of this page. As such, the equivalent problem can be reformulated as
\begin{equation}\label{RIT-reformulate}
\begin{aligned}
\mathop{\text{maximize}}_{\mathbf{s}, \mathbf{p}, \boldsymbol\alpha} \quad & W_r(\mathbf{s}, \mathbf{p}, \boldsymbol\alpha) \\
\text{s.t.} \quad & \eqref{eq:constraint_p}, \eqref{eq:constraint_s},
\end{aligned}
\end{equation}
 where $\boldsymbol\alpha=(\{\alpha_i\}_{i\in\mathcal B})$ denotes the
collection of auxiliary variables introduced by the RIT, with each $\alpha_i$ associated with the $i$-th cell.
\begin{figure*}[!t]
\normalsize 
\begin{equation} \label{f_d}
W_r(\mathbf s,\mathbf p,\boldsymbol\alpha) =\sum_{i\in\mathcal B}  
\frac{\omega_{s_i}(1+\alpha_i)\big[\log(1+\alpha_i)\big]^2 |h_{i,s_i}|^2 p_{s_i}}
{\alpha_i^2 \left( \sum_{j\neq i}|h_{i,s_j}|^2p_{s_j}+\sigma^2 \right) +\big[(1+\alpha_i)\log(1+\alpha_i)-\alpha_i\big] |h_{i,s_i}|^2p_{s_i}}
\end{equation}
\hrulefill 
\vspace*{4pt} 
\end{figure*}
When $(\mathbf{s}, \mathbf{p})$ are fixed, for an active cell with positive desired received signal power, the optimal auxiliary variable $\alpha_i$ is obtained in closed-form as the current SINR of the scheduled user in cell $i$, that is 
\begin{equation} \label{RIT_av_update}
    \alpha_i^\star = \frac{|h_{i,s_i}|^2 p_{s_i}}{\sum_{j \neq i} |h_{i,s_j}|^2 p_{s_j} + \sigma^2}.
\end{equation}
For a zero-SINR cell, the corresponding rate is zero. This is consistent with the boundary case in Theorem~\ref{RIT}, where the associated RIT-reformulated term is treated as a zero contribution, and therefore the auxiliary variable $\alpha_i$ can be assigned any positive value without changing the objective.

As such, substituting $\boldsymbol{\alpha}^\star$ back into $W_r(\mathbf{s}, \mathbf{p}, \boldsymbol{\alpha})$ exactly recovers the original objective function $W_o(\mathbf{s}, \mathbf{p})$. Therefore, the RIT-based formulation \eqref{RIT-reformulate} is equivalent to the original WSR maximization problem \eqref{eq:main_problem}, while moving the SINR term outside the logarithm and preparing the problem for the subsequent QT.
%
It is worth noting that the optimal update rules for $\boldsymbol{\alpha}^\star$ and $\boldsymbol{\gamma}^\star$ in \cite[Eq.25]{FPPart2} are mathematically identical, despite the structural differences between their underlying surrogate functions. This equivalence arises because the first-order derivatives of both surrogate functions coincide, thereby yielding the identical solution during the minimization step.

Next, we apply the QT to the fractional term in the RIT-reformulated objective $W_r(\mathbf s,\mathbf p,\boldsymbol\alpha)$, in order to optimize $(\mathbf s,\mathbf p)$ for fixed $\boldsymbol\alpha$. Let us introduce an auxiliary variable $y_i$ for the $i$-th ratio term in \eqref{f_d}. Using the QT, the RIT objective $W_r(\mathbf{s}, \mathbf{p}, \boldsymbol{\alpha})$ can be further reformulated as \eqref{eq:f_q_combined} on the top of next page, where $b(\cdot)$ and $c(\cdot)$ are defined previously in \eqref{operator_b} and \eqref{operator_c}.
Note that the second equality in \eqref{eq:f_q_combined} is obtained by rearranging the terms associated with the same scheduled user $s_i$.
\begin{figure*}[!t]
\normalsize 
\begin{equation} \label{eq:f_q_combined}
\begin{aligned}
W_{rq}(\mathbf{s}, \mathbf{p}, \boldsymbol{\alpha}, \mathbf{y}) 
& = \sum_{i \in \mathcal{B}} \left( 2y_i \sqrt{\omega_{s_i} c(\alpha_i) |h_{i,s_i}|^2 p_{s_i}} - y_i^2 \left[ \alpha_i^{2} \left( \sum_{j \neq i} |h_{i,s_j}|^2 p_{s_j} + \sigma^2 \right) + b(\alpha_i) |h_{i,s_i}|^2 p_{s_i} \right] \right) \\
& = \sum_{i \in \mathcal{B}} \left( - \alpha_i^{2} y_i^2 \sigma^2 + 2y_i \sqrt{\omega_{s_i} c(\alpha_i) |h_{i,s_i}|^2 p_{s_i}} - b(\alpha_i) y_i^2 |h_{i,s_i}|^2 p_{s_i} - \sum_{j \neq i} \alpha_j^{2} y_j^2 |h_{j,s_i}|^2 p_{s_i} \right).
\end{aligned}
\end{equation}
\hrulefill 
\vspace*{4pt} 
\end{figure*}
In this expression, $\mathbf{y} = (\{y_i\}_{i \in \mathcal{B}})$ denotes the collection of auxiliary variables introduced by the QT. Note that although the notation $y_i$ is the same as that used in the classical FP formulation, its meaning is different here, because the ratio to which the QT is applied is generated by the proposed RIT rather than by the LDT.

Thus, for fixed $\boldsymbol{\alpha}$, the optimization of $W_r(\mathbf{s}, \mathbf{p}, \boldsymbol{\alpha})$ over $(\mathbf{s}, \mathbf{p})$ can be equivalently replaced by the following problem
\begin{equation}\label{RIT-QT-reformulate}
\begin{aligned}
\mathop{\text{maximize}}_{\mathbf{s}, \mathbf{p}, \mathbf{y}} \quad & W_{rq}(\mathbf{s}, \mathbf{p}, \underbrace{\boldsymbol{\alpha}}_{\text{fixed}}, \mathbf{y}) \\
\text{s.t.} \quad & \eqref{eq:constraint_p}, \eqref{eq:constraint_s}.
\end{aligned}
\end{equation}
This reformulation retains the key structural property of the classical FP's reformulated objective function \cite[Eq.(26)]{FPPart2}, after operating LDT and QT successively. Similarly, $W_{rq}$ also separates the desired signal contribution of each scheduled user from its interference penalties imposed on neighboring cells. Consequently, once $\boldsymbol{\alpha}$ and $\mathbf{y}$ are fixed, the scheduling and power-control variables can be updated on a per-cell basis, which enables the distributed scheduling and closed-form power update.

The overall strategy is then to iteratively update the RIT introduced auxiliary variables $\boldsymbol{\alpha}$ according to the current SINR values, and optimize $(\mathbf{s}, \mathbf{p}, \mathbf{y})$ according to the QT formulation above. For a fixed $(\mathbf{s}, \mathbf{p}, \boldsymbol{\alpha})$, the optimal $y_i$ is obtained in closed-form by setting $\partial W_{rq} / \partial y_i=0$ as
\begin{equation} \label{QT_av_update}
    y_i^{\star} = \frac{\sqrt{\omega_{s_i} c(\alpha_i) |h_{i,s_i}|^2 p_{s_i}}}{\alpha_i^{2} \left( \sum_{j \neq i} |h_{i,s_j}|^2 p_{s_j} + \sigma^2 \right) + b(\alpha_i) |h_{i,s_i}|^2 p_{s_i}}.
\end{equation}

Then the problem becomes to optimize $(\mathbf{s}, \mathbf{p})$ for fixed $\boldsymbol{\alpha}$ and $\mathbf{y}$. Consider cell $i$, and suppose that user $k \in \mathcal{K}_i$ is scheduled by BS $i$, i.e., $s_i = k$. Then the terms in $W_{rq}$ that depend on $p_k$ can be collected as
\begin{equation}
2r_i(k)\sqrt{p_k} - d_i(k)p_k,
\end{equation}
where
\begin{equation}
r_i(k) \triangleq y_i \sqrt{\omega_k c(\alpha_i)|h_{i,k}|^2},
\end{equation}
and
\begin{equation}
d_i(k) \triangleq b(\alpha_i)y_i^2|h_{i,k}|^2 + \sum_{j \neq i} \alpha_j^2 y_j^2 |h_{j,k}|^2.
\end{equation}
The coefficient $r_i(k)$ measures the useful link gain of scheduling user $k$ in its serving cell, while the coefficient $d_i(k)$ represents the total penalty associated with this scheduling decision. More specifically, the first term in $d_i(k)$ comes from the desired-link denominator induced by the RIT, whereas the second term accounts for the interference caused by user $k$ to all other cells.

Therefore, if user $k$ is scheduled in cell $i$, its optimal transmit power is obtained by solving
\begin{equation}
\max_{0 \le p_k \le P_{\max}} \quad 2r_i(k)\sqrt{p_k} - d_i(k)p_k.
\end{equation}

The above problem is a simple concave quadratic maximization over $\sqrt{p_k}$. Hence, the optimal transmit power is given in closed form by
\begin{equation}
p_{i,k}^\star = \min \left\{ P_{\max}, \left[ \frac{r_i(k)}{d_i(k)} \right]^2 \right\}.
\end{equation}
Substituting the definitions of $r_i(k)$ and $d_i(k)$, we obtain the explicit update
\begin{equation} \label{update_power_p}
p_{i,k}^{\star} = \min \left\{ P_{\max}, \frac{y_i^2 \omega_k c(\alpha_i) |h_{i,k}|^2}{\left[ b(\alpha_i) y_i^2 |h_{i,k}|^2 + \sum_{j \neq i} \alpha_j^2 y_j^2 |h_{j,k}|^2 \right]^2} \right\}.
\end{equation}
After obtaining the optimal candidate power for each user $k \in \mathcal{K}_i$, we substitute $p_{i,k}^\star$ back into the per-user objective and define the scheduling metric
\begin{equation}
\Xi_i(k) = 2r_i(k)\sqrt{p_{i,k}^\star} - d_i(k)p_{i,k}^\star.
\end{equation}
Equivalently, this metric can be written in a utility-minus-penalty form as
\begin{equation} \label{utility-minus-penalty equ}
\Xi_i(k) = G_i(k) - \sum_{j \neq i} D_j(k),
\end{equation}
where
\begin{equation}
G_i(k) = 2 y_i \sqrt{\omega_k c(\alpha_i) |h_{i,k}|^2 p_{i,k}^\star} - b(\alpha_i) y_i^2 |h_{i,k}|^2 p_{i,k}^\star,
\end{equation}
and
\begin{equation} \label{Djk}
D_j(k) = \alpha_j^2 y_j^2 |h_{j,k}|^2 p_{i,k}^\star.
\end{equation}
Here, $G_i(k)$ denotes the local utility of scheduling user $k$ in cell $i$, while $D_j(k)$ represents the interference penalty imposed on BS $j$. Hence, the scheduling decision has a clear interpretation that each BS selects the user that provides the largest net utility after subtracting the prices paid to other cells.

\begin{remark} \upshape
    Comparing Eq.\eqref{utility-minus-penalty equ}--\eqref{Djk} with Eq.(31)--(33) in \cite{FPPart2} (cf. the utility-minus-penalty metric for user $k$ in the $i$-th cell under the classical FP LDT+QT way), it can be observed our proposed metric preserves the same scheduling structure while modifying three key coefficients induced by the underlying transform, as summarized in Table~\ref{table_comparison}.
\begin{table}[H]
\renewcommand{\arraystretch}{1.3}
\caption{Coefficient Comparison Between Classical FP and Proposed SEFP Scheduling Metrics}
\label{table_comparison}
\centering
\begin{tabular}{p{3.2cm}cc}
\Xhline{0.8pt}
\textbf{Component} & \textbf{\makecell{LDT+QT}} & \textbf{\makecell{RIT+QT}} \\
\hline
Desired-link reward & $1 + \gamma_i$ & $c(\alpha_i)$ \\
Desired-link self penalty & $1$ & $b(\alpha_i)$ \\
Cross-cell interference price & $1$ & $\alpha_j^2$ \\
\Xhline{0.8pt}
\end{tabular}
\end{table}

    Unlike the classical FP metric, where only the desired-link reward depends on the current SINR-related auxiliary variable, the proposed SEFP metric adjusts all three components according to the current auxiliary variables. Therefore, the proposed metric can be interpreted as a more adaptive version than the classical closed-form FP scheduling metric.
    
\end{remark}

Back to the main line, the optimal scheduling decision for cell $i$ under fixed $\boldsymbol{\alpha}$ and $\boldsymbol{y}$ is therefore
\begin{equation} \label{update_schedule_p}
s_i^\star = \begin{cases}
\varnothing, & \text{if } \max\limits_{k \in \mathcal{K}_i} \Xi_i(k) \le 0, \\
\arg\max\limits_{k \in \mathcal{K}_i} \Xi_i(k), & \text{otherwise.}
\end{cases}\end{equation}
If $s_i^\star = \varnothing$, no user is scheduled in cell $i$, and the transmit power in that cell is set to zero. Otherwise, the scheduled user transmits with power
\begin{equation}
p_{s_i^\star} = p_{i,s_i^\star}^\star.
\end{equation}

The entire SEFP approach is summarized in  Algorithm 1.

\begin{algorithm}[H]
\caption{Surrogate-Enhanced FP for Multicell Uplink Joint Scheduling and Power Control}
\label{alg:alg1}
\begin{algorithmic}

\Require Channel coefficients $\{h_{i,k}\}$, user weights $\{\omega_k\}$, maximum transmit power $P_{\max}$, noise power $\sigma^2$, and convergence tolerance $\epsilon$.
\Output Scheduling vector $\mathbf{s}$ and transmit power vector $\mathbf{p}$.

\State \textbf{Initialization}: A feasible scheduling vector $\mathbf{s}^{(0)}$ and power vector $\mathbf{p}^{(0)}$.
\State Set iteration index $t=0$.
\Repeat
    \State Set $t \leftarrow t+1$.
    \For{each BS $i\in\mathcal{B}$}
        \State Update the RIT auxiliary variable $\alpha_i$ by \eqref{RIT_av_update}.
        \State Compute $b(\alpha_i)$, and $c(\alpha_i)$ via \eqref{operator_b} and \eqref{operator_c}.
        \State Update the QT auxiliary variable $y_i$ by \eqref{QT_av_update}.
        \State Update original variables $(s_i,p_{i,k})$ via \eqref{update_schedule_p} and \eqref{update_power_p}.
    \EndFor
\Until{$|W_o(\mathbf{s}^{(t)},\mathbf{p}^{(t)})-W_o(\mathbf{s}^{(t-1)},\mathbf{p}^{(t-1)})|\le \epsilon$}
\State \textbf{return} $\mathbf{s}^{(t)}$ and $\mathbf{p}^{(t)}$.
\end{algorithmic}
\end{algorithm}

\subsection{Convergence and Complexity Analysis}
We summarize the proposed SEFP algorithm's convergence property in the following proposition.
\begin{proposition} \label{pro_converge}
    Algorithm 1 is guaranteed to converge, in the sense that the original WSR objective $W_{o}$ is monotonically non-decreasing after each iteration and eventually upper-bounded at some point.
\end{proposition}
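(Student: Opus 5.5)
The argument is the standard block-coordinate-ascent chain for equivalent transforms. Each update step of Algorithm 1 maximizes $W_{rq}$ exactly over one block of variables, and at the $\boldsymbol{\alpha}$- and $\mathbf{y}$-updates $W_{rq}$ coincides with $W_o$ (respectively $W_r$). Write the iterate at the start of iteration $t$ as $(\mathbf{s}^{(t-1)},\mathbf{p}^{(t-1)})$.

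\textbf{Step 1: the $\boldsymbol{\alpha}$-update.} Fix $(\mathbf{s},\mathbf{p})=(\mathbf{s}^{(t-1)},\mathbf{p}^{(t-1)})$ and set $\boldsymbol{\alpha}^{(t)}$ by \eqref{RIT_av_update}. By Theorem~\ref{RIT}, applied per cell with $A_i=|h_{i,s_i}|^2p_{s_i}$ and $B_i$ the interference-plus-noise, this gives
\[
W_r(\mathbf{s}^{(t-1)},\mathbf{p}^{(t-1)},\boldsymbol{\alpha}^{(t)})=W_o(\mathbf{s}^{(t-1)},\mathbf{p}^{(t-1)}).
\]
Cells with zero SINR contribute zero for any positive $\alpha_i$, so they need no separate treatment.

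\textbf{Step 2: the $\mathbf{y}$-update.} Set $\mathbf{y}^{(t)}$ by \eqref{QT_av_update}. By Lemma~\ref{QTlemma}, equivalently Theorem~\ref{theorem_RIT_QT},
\[
W_{rq}(\mathbf{s}^{(t-1)},\mathbf{p}^{(t-1)},\boldsymbol{\alpha}^{(t)},\mathbf{y}^{(t)})=W_r(\mathbf{s}^{(t-1)},\mathbf{p}^{(t-1)},\boldsymbol{\alpha}^{(t)}).
\]

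\textbf{Step 3: the $(\mathbf{s},\mathbf{p})$-update.} Hold $(\boldsymbol{\alpha}^{(t)},\mathbf{y}^{(t)})$ fixed. The second line of \eqref{eq:f_q_combined} shows that $W_{rq}$ is a constant $-\sum_i\alpha_i^2y_i^2\sigma^2$ plus a sum over cells. The $i$-th summand depends only on $(s_i,p_{s_i})$ and equals $\Xi_i$-type terms, with the value $0$ when $s_i=\varnothing$. Two facts then give a global maximum of $W_{rq}$ over the feasible set:
\begin{itemize}
\item \eqref{update_power_p} is the exact maximizer of the concave function $2r_i(k)\sqrt{p}-d_i(k)p$ on $[0,P_{\max}]$, since it is concave in $\sqrt p$.
\item \eqref{update_schedule_p} picks the best option among the finitely many choices, including $\varnothing$.
\end{itemize}
Hence $W_{rq}(\mathbf{s}^{(t)},\mathbf{p}^{(t)},\boldsymbol{\alpha}^{(t)},\mathbf{y}^{(t)})\ge W_{rq}(\mathbf{s}^{(t-1)},\mathbf{p}^{(t-1)},\boldsymbol{\alpha}^{(t)},\mathbf{y}^{(t)})$.

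\textbf{Step 4: closing the chain.} Going back to the original objective, use
\[
W_{rq}(\cdot,\boldsymbol{\alpha},\mathbf{y})\le \max_{\mathbf{y}}W_{rq}=W_r(\cdot,\boldsymbol{\alpha})\le \sup_{\boldsymbol{\alpha}}W_r=W_o(\cdot).
\]
The last step needs care. The supremum over $\boldsymbol{\alpha}$ equals $W_o$ because, for positive SINR, $W_r$ at fixed $\alpha$ is exactly the RIT minorant $\omega\,\ell_{\mathrm{RIT}}(r;\alpha)\le\omega f(r)$ (Corollary~\ref{f(rm)RITsurrogate}). Chaining Steps 1–4 gives $W_o^{(t)}\ge W_o^{(t-1)}$.

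\textbf{Step 5: boundedness.} Each SINR is at most $\max_{i,k}|h_{i,k}|^2P_{\max}/\sigma^2$, so
\[
W_o\le\sum_{i}\max_k\omega_k\log\bigl(1+\max_{k}|h_{i,k}|^2P_{\max}/\sigma^2\bigr)<\infty.
\]
A monotone non-decreasing sequence that is bounded above converges, which establishes the proposition. The stopping rule in Algorithm 1 therefore terminates for every $\epsilon>0$.

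\textbf{Main obstacle.} I expect the delicate points to be the two loose ends of the chain. First, the inequality $W_r\le W_o$ for arbitrary fixed $\boldsymbol{\alpha}$ must hold including cells where the new SINR is zero or the cell is switched off; here the term is $0$ on both sides. Second, Algorithm 1 is written as a per-cell loop, so Steps 1–3 must be ordered cleanly. I would argue that all $\alpha_i,y_i$ are computed from the previous iterate before any $(s_i,p_i)$ is changed, i.e. a Jacobi-style update. If the loop were instead read as Gauss–Seidel, I would fall back on the same chain applied blockwise, using that each cell's $(s_i,p_i)$ update maximizes its own separable summand.
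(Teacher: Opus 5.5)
Your proof is correct and is essentially the paper's block-coordinate-ascent chain from Appendix~D, read in the opposite direction: $W_o=W_r$ at the $\boldsymbol\alpha$-update, $W_r=W_{rq}$ at the $\mathbf y$-update, exact per-cell maximization of $W_{rq}$ over $(\mathbf s,\mathbf p)$, then $W_{rq}\le W_r\le W_o$ at the new point, plus boundedness; your explicit use of Corollary~1 to show that $\boldsymbol\alpha^\star$ is a global (not merely stationary) maximizer is a welcome addition. One caveat: commit to the Jacobi reading, which is also what the paper's chain implicitly assumes, because your Gauss--Seidel fallback does not work as stated --- if only $\alpha_i,y_i$ are refreshed at cell $i$'s turn, the stale $\alpha_j,y_j$ ($j\neq i$) generally leave $W_{rq}$ strictly below $W_o$ at the pre-update point, so the chain closes only if all auxiliary variables are refreshed before each block update.
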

\begin{proof}
See Appendix~\ref{Proof_of_converge}.
\end{proof}

In what follows, we analyze the per-iteration complexity of the proposed SEFP algorithm.
In each iteration, the update of the RIT auxiliary variable $\alpha_i$ requires computing the current SINR of the scheduled user in cell $i$. Since the interference-plus-noise term involves the scheduled users in the other cells, updating all $\alpha_i$ requires $O(|\mathcal{B}|^2)$ scalar operations. The additional computation of $b(\alpha_i)$, and $c(\alpha_i)$ only requires $O(|\mathcal{B}|)$ operations. Similarly, the QT auxiliary variable update has the same interference summation structure, and therefore updating all $y_i$'s also requires $O(|\mathcal{B}|^2)$ operations.
The dominant cost comes from the closed-form scheduling and power-control update. For each candidate user $k \in \mathcal{K}_i$, computing the corresponding interference penalty coefficient requires summing over the other BSs, which costs $O(|\mathcal{B}|)$ operations. Once this coefficient is obtained, the candidate transmit power and the scheduling metric are computed in closed-form with constant additional complexity. Therefore, evaluating all $|\mathcal{K}_i|$ candidate users in one cell requires $O(|\mathcal{K}_i| |\mathcal{B}|)$ operations, and evaluating all cells requires $O(|\mathcal{K}_i| |\mathcal{B}|^2)$ operations.

The proposed SEFP algorithm has the same asymptotic per-iteration complexity as the classical closed-form FP algorithm \cite[Sec. VI-B]{FPPart2}. Both algorithms preserve the per-cell separable scheduling structure and admit closed form candidate power updates. The difference is that SEFP replaces the LDT-induced coefficients in the classical FP method with the RIT-induced coefficients $b(\alpha_i)$ and $c(\alpha_i)$. These extra scalar computations introduce only a constant factor overhead and do not change the complexity order.

\section{Simulation results}
In this section, we evaluate the proposed SEFP algorithm across different network utility metrics and SNRs. We benchmark performance against the following three baselines.\footnote{The source code for the three baseline algorithms is available online at \baselinecode}
\begin{itemize}
    \item {\textit{FP}}: This baseline corresponds to the original FP-based coordinated uplink scheduling and power control algorithm in \cite{FPPart2}. It first applies the LDT to move the SINR term outside the logarithm and then applies the QT to obtain a separable scheduling and power control structure. This baseline is included to directly examine the benefit of replacing the LDT-induced fractional surrogate with the proposed RIT-induced one.
    \item {\textit{Power Control}}\footnote{Employing a power control algorithm to solve optimization problems involving scheduling variables suffers from a severe limitation. Specifically, it frequently results in scenarios where certain users are never scheduled during the entire transmission horizon—a phenomenon referred to as \textbf{premature turning-off} \cite[Sec. IV-B]{FPPart2}.}: This baseline treats uplink scheduling implicitly as a global power control problem over all users. Users assigned zero transmit power are regarded as unscheduled, while users assigned positive powers are scheduled. Since the resulting power control problem is highly nonconvex, the WMMSE algorithm \cite{WMMSE_1,WMMSE_2} is used to obtain a local optimum. This baseline reflects the performance of using continuous power optimization to indirectly determine scheduling decisions.
    \item {\textit{Fixed Interference Method}}: This baseline alternates between scheduling and power control under a fixed-interference approximation. In the scheduling stage, each cell selects the user that maximizes its weighted rate while treating the interference from the previous iteration as fixed. In the power control stage, the transmit powers of the scheduled users are updated by solving the corresponding weighted sum-rate maximization problem. These two steps repeat until convergence or until the prescribed maximum number of iterations is reached.
\end{itemize}

The remainder of this section is organized as follows: Section~\ref{Section5SubsectionA} outlines the common experimental configurations. Sections~\ref{Section5SubsectionB} and \ref{Section5SubsectionC} evaluate the performance of the proposed algorithm against baseline approaches under three distinct network utility metrics and various received SNR levels, respectively. 

\subsection{Experimental Setup} \label{Section5SubsectionA}
We consider an uplink multi-cell SISO network with the standard $L=7$ cells with topology wrap-around. The inter-BS distance is set to 0.8 km, corresponding to a cell radius of $R_c = 0.8 / \sqrt{3}$ km. The channel incorporates both large-scale fading and small-scale fading. The large-scale fading consists of a distance-dependent pathloss component and a log-normal shadowing component. Specifically, the pathloss is modeled as $128.1 + 37.6 \log_{10}(\textrm{d})$ dB, where $\textrm{d}$ is the link distance in kilometers, and the shadowing follows a zero-mean Gaussian distribution with a standard deviation of 8 dB. The small-scale fading is characterized by the Rayleigh fading model, where the channel coefficients are independent and identically distributed (i.i.d.) zero-mean circularly symmetric complex Gaussian (CSCG) random variables with unit variance.

The background noise spectral density is $-169\ \textrm{dBm/Hz}$ over a 10 MHz bandwidth. 
The maximum transmit power spectral density of each user is set to $-47\ \textrm{dBm/Hz}$.
For each independent channel realization, a total of 84 users are uniformly distributed in the network, and each user is associated with the BS from which it receives the strongest large-scale channel gain. To ensure that all BSs are utilized in practice, we enforce a constraint that every BS must be associated with at least one user; otherwise, the user deployment is regenerated until this condition is satisfied. In each scheduling interval, at most one user is scheduled in each cell, and the transmit power of the scheduled user is optimized jointly with the scheduling decision. In the following experiments, one channel realization corresponds to one Monte Carlo experiment, and one scheduling interval corresponds to one time slot. For fairness, all algorithms are initialized with the same feasible scheduling and power vectors for each channel realization.

In all simulations, the maximum iteration number of each iterative algorithm is set to 50, which is empirically sufficient for convergence under the considered settings. In addition, the proposed SEFP algorithm is equipped with an additional convergence tolerance stopping criterion described in Algorithm 1, where the convergence tolerance is set to $10^{-10}$. Thus, SEFP stops when either the tolerance criterion is met or the maximum iteration number is reached.

\subsection{Simulation across Different Network Utility Metrics} \label{Section5SubsectionB}
In this subsection, we evaluate the algorithm's performance under three distinct network utility metrics.
\subsubsection{\textbf{Proportional-Fairness (PF) Utility Metric}}\ 
\newline
\indent Given a specific channel realization, the weight assigned to user $k$ in time slot $t$ is updated inversely proportional to its long-term average throughput, expressed as $$\omega_k(t) = {1}/{\bar{R}_k(t)},$$ where $\bar{R}_k(t)$ represents the user $k$'s historical average throughput before the time slot $t$. Consequently, a WSR optimization subproblem is solved at each time slot. The PF metric represents the long-term ($T$ scheduling intervals) network log-utility
$$U_{\text{PF}}(T) = \sum_{k \in \mathcal{K}} \log(\bar{R}_k(T)).$$

Notably, the simulation setting here is identical to that used in \cite[Sec.~IV-D]{FPPart2}, so that the performance gain can be attributed to the proposed Reciprocal-Inversion transform rather than to a different network configuration.
\begin{figure}[htbp] 
  \centering 
  \hspace*{-0.4cm}\includegraphics[scale=0.66]{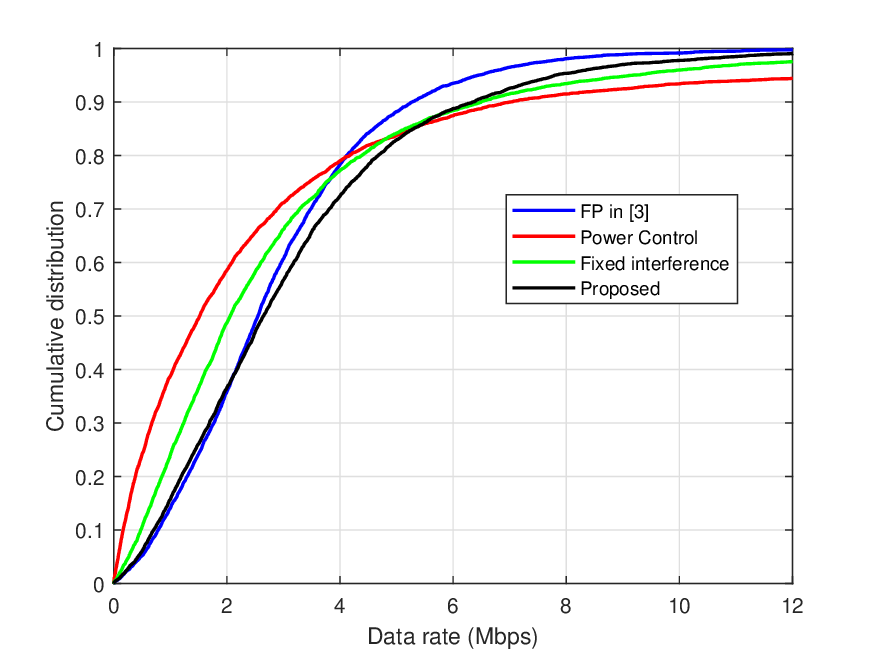} 
  \caption{CDF of the long-term average user rates under the PF utility metric with $60$ Monte Carlo channel realizations and $500$ scheduling intervals per realization.} 
  \label{fig:cdf_60MC_500TS} 
\end{figure}

\begin{table}[!t] 
\centering 
\caption{PF utility comparison under the same setting in \cite{FPPart2}} \label{tab:pf_utility_comparison} 
\small 
\begin{tabular}{c|c|c} \hline Algorithm & Mean  & Std.  \\ \hline FP in \cite{FPPart2} & 65.946 & 8.038 \\ Power Control & 17.436 & 35.686 \\ Fixed Interference & 51.112 & 9.197 \\ Proposed & 68.943 & 8.199 \\ \hline \multicolumn{3}{c}{Paired comparison: Proposed versus FP in \cite{FPPart2}} \\ \hline Mean difference & Std. difference & Proposed $>$ FP in [3]\\ \hline 2.996 & 0.804 & $60/60$ \\ \hline \multicolumn{3}{c}{Minimum and maximum paired differences: $0.489$ and $4.475$} \\ \hline 
\end{tabular} 
\begin{center}
\footnotesize
For the Power Control baseline, a small numerical floor is applied to the long-term average rate before taking the logarithm. Specifically, we compute the PF utility using $\max{(\bar R_k(T),10^{-12}})$ in place of $\bar R_k(T)$ for every user $k$. Because under some specific channel realizations, the premature turning-off phenomena (mentioned in footnote 3) occur on certain users, so that their long-term average rates become zero and $U_{\mathrm{PF}}(T)$ would consequently be $-\infty$. This floor does not eliminate the penalty for unscheduled users; rather, each such user contributes $\log(10^{-12})\approx -27.63$ to the system PF metric, thereby maintaining a severe penalty while keeping the evaluation finite.
\end{center}
\end{table}

Fig.~\ref{fig:cdf_60MC_500TS} shows the empirical cumulative distribution function (CDF) for each user's average data rate $\bar{R}_k(T)$. For each channel realization, the number of time slots is $T=500$, and the channel realization number is 60. Our proposed locates more to the right among the four algorithms, especially in the medium-rate region. Table \ref{tab:pf_utility_comparison} shows our proposed algorithm outperforming others more apparently, which achieves the highest mean PF utility among all compared methods, improving the mean utility from $65.946$ for FP to $68.943$. To further verify whether the improvement over FP is statistically stable, we perform a paired comparison between the proposed SEFP algorithm and FP over the same $60$ independent channel realizations. The average paired utility gain is $2.996$, and the corresponding approximate $95\%$ confidence interval (CI) is $[2.793,3.200]$, which is strictly positive. Moreover, the proposed algorithm outperforms FP in all $60$ channel realizations, with the minimum paired gain still being $0.489$. These results demonstrate that the gain of the proposed SEFP algorithm under the PF utility metric is not caused by a few favorable channel realizations, but is consistently observed across the whole Monte Carlo experiment.

\subsubsection{\textbf{Equal-Weight Sum-Rate (EW-SR) Metric}}\ 
\newline
\indent Under this metric, for each channel realization, all the users' weights are set as 
$$\omega_k = 1, \quad \forall k.$$
We care about the network's EW-SR, expressed as
\begin{equation}
 U_{\text{EW-SR}} = \sum_{k \in \mathcal{K}} {R}_k.  
\end{equation}
Notice that this metric is instantaneous, which implies that the time slot number $T=1$ naturally. 

Fig.~\ref{fig:equalweight_SR_10000MC} presents the empirical CDF of the network EW-SR over 10,000 independent Monte Carlo channel realizations. Since the EW-SR metric assigns the same priority to all users, the resulting performance directly reflects the capability of each algorithm to maximize the instantaneous network throughput under the same channel realization. It can be observed that the proposed SEFP algorithm provides a consistently right-shifted CDF compared with the three baseline algorithms.
Table \ref{tab:best_algorithm_equal_weight_wsr} further reports the four algorithms' mean, standard deviation, and best count percentage (BCP)  under the EW-SR metric. The best count percentage denotes the percentage of Monte Carlo realizations in which an algorithm achieves the maximum EW-SR. If multiple algorithms attain the same maximum EW-SR value in one realization, all tied algorithms are counted. The proposed algorithm achieves the maximum EW-SR in 9,596 out of 10,000 Monte Carlo realizations. In contrast, FP in [3], Power Control, and Fixed Interference achieve the best EW-SR in only 0.44\%, 1.63\%, and 2.11\% of the realizations, respectively. The dominant best-count percentage of the proposed algorithm clearly demonstrates our proposed algorithm's superiority.

\begin{figure}[htbp] 
  \centering 
  \hspace*{-0.4cm}\includegraphics[scale=0.66]{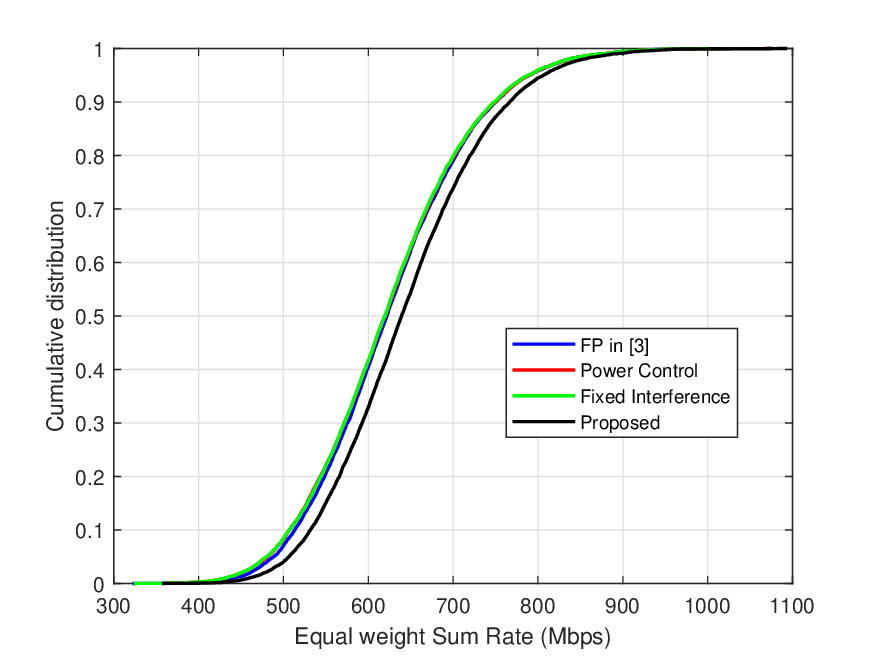} 
  \caption{CDF of the network EW-SR utility with $10,000$ Monte Carlo channel realizations.} 
  \label{fig:equalweight_SR_10000MC} 
\end{figure}

\begin{table}[t]
\caption{Mean, Standard Deviation (Std.) and BCP Statistics for the EW-SR Metric}
\label{tab:best_algorithm_equal_weight_wsr}
\small
\centering
\begin{tabular}{lccc}
\hline
\textbf{Algorithm} & \textbf{Mean} & \textbf{Std.} & \textbf{BCP} \\
\hline
FP in \cite{FPPart2} & 627.730 & 92.582 & 0.44\% \\
Power Control & 624.898 & 94.540 & 1.63\% \\
Fixed Interference & 624.503 & 94.276 & 2.11\% \\
Proposed & 645.149 & 92.084 & 95.96\% \\
\hline
\end{tabular}
\end{table}

\subsubsection{\textbf{Normalized Random-Priority Weighted Sum-Rate (NRP-WSR) Metric}}\ 
\newline
\indent The EW-SR metric evaluates the instantaneous network throughput when all users are assigned the same priority. However, in many scheduling scenarios, different users may have different service priorities. To examine the robustness of the proposed algorithm under heterogeneous user priorities, we further consider an NRP-WSR metric.

For each channel realization, random user weights are generated and then normalized on a per-cell basis. Specifically, the weights of the users associated with the BS-$i$ satisfy
$$\sum_{k \in \mathcal{K}_i} \omega_k = \frac{|\mathcal{K}|}{|\mathcal{B}|}=12, \quad \forall i \in \mathcal{B}.$$
The resulting metric emphasizes the effect of random user priorities while avoiding bias caused by unequal user association among BSs. The corresponding NRP-WSR is defined as
\begin{equation}
 U_{\text{NRP-WSR}} = \sum_{k \in \mathcal{K}} \omega_k{R}_k.  
\end{equation}

For each of the 500 independent Monte Carlo channel realizations, we generate 20 independent normalized random-priority weight settings. Therefore, a total of 10,000 NRP-WSR samples are obtained for each algorithm. We believe that somehow the results characterize if our proposed algorithm could achieve the more external rim points of the WSR achievable rate region of the scheme combining power control and scheduling at the transmitter as well as treating interference as noise at the receiver.

Fig.~\ref{fig:randomweightWSR} shows the empirical CDF of the NRP-WSR achieved by the four algorithms. The Power Control baseline has the most left-shifted CDF, which suggests that treating scheduling implicitly through continuous power control is less effective when user priorities vary randomly. The proposed SEFP algorithm still exhibits a clear right shift compared with the three baselines, which shows our proposed algorithm's superiority under the NRP-WSR metric. Table \ref{tab:best_algorithm_NPR-WSRMETRIC} further summarizes the algorithms' mean, standard deviation, and BCP of the NRP-WSR metric. The proposed SEFP algorithm achieves the highest mean NRP-WSR, outperforming others by at least 2.47\%, and attains the best NRP-WSR in 78.18\% of all channel-weight realizations.

Overall, the simulation results delineate that our proposed algorithm is not only the best under the PF metric (the metric considered in \cite{FPPart2}), but also the best under the EW-SR and NRP-WSR metrics, where the original FP algorithm performs relatively poorly.

\begin{table}[h]
\caption{Mean, Standard Deviation (Std.) and BCP for the NRP-WSR Metric}
\label{tab:best_algorithm_NPR-WSRMETRIC}
\small
\centering
\begin{tabular}{lccc}
\hline
\textbf{Algorithm} & \textbf{Mean} & \textbf{Std.} & \textbf{BCP} \\ \hline
FP in [3]          & 826.479       & 163.526       & 1.19\%       \\
Power Control      & 757.175       & 175.354       & 3.74\%       \\
Fixed Interference & 831.948       & 163.906       & 16.93\%      \\
Proposed           & 853.001       & 163.836       & 78.18\%      \\ \hline
\end{tabular}
\end{table}

\begin{figure}[htbp] 
  \centering 
  \hspace*{-0.4cm}\includegraphics[scale=0.66]{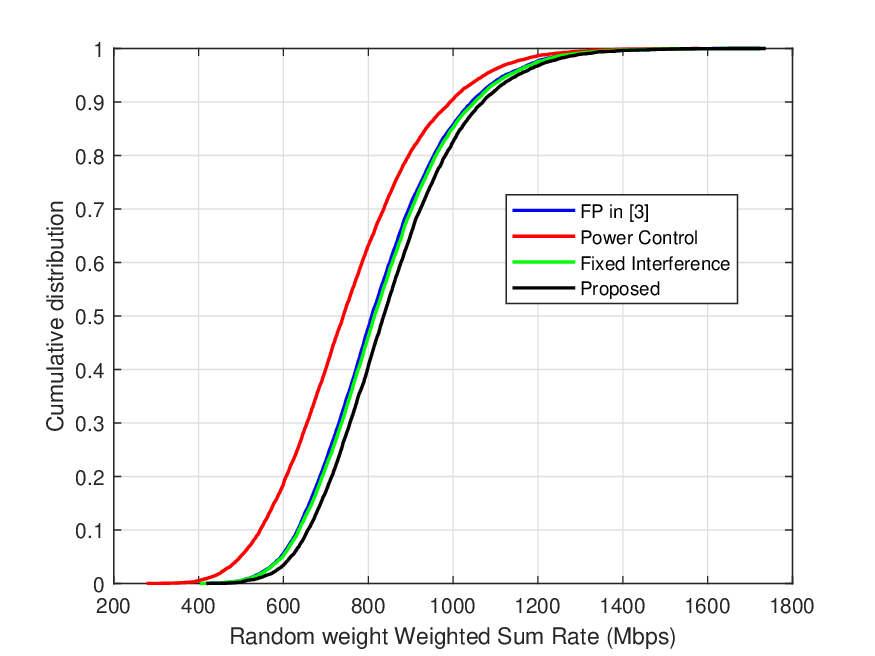} 
  \caption{CDF of the network NRP-WSR utility with $500$ Monte Carlo channel realizations and $20$ normalized random-priority weight settings per realization.} 
  \label{fig:randomweightWSR} 
\end{figure}

\subsection{Simulation across Different SNRs} \label{Section5SubsectionC}
In this subsection, we further evaluate the performance of the proposed SEFP algorithm and the baseline algorithms under different SNR levels. The nominal SNR is varied from 0 dB to 14 dB with a step size of 2 dB. For a fair comparison, the background noise power, channel generation procedure, user association rule, and network topology are kept unchanged for all SNR points. The SNR variation is implemented by adjusting the maximum transmit power constraint $P_{\max}$ of each user.

Specifically, the SNR is defined as a nominal reference-link SNR at distance $0.4$ km from the serving BS. According to the pathloss model in Section \ref{Section5SubsectionA}, the corresponding pathloss is approximately $113.14$ dB. Since the noise spectral density is fixed at $-169$ dBm/Hz over a $10$ MHz bandwidth, the total noise power is $-99$ dBm. Therefore, for a target nominal SNR $\rho$ in dB, the maximum transmit power is set as
$$
P_{\max,\text{dBm}}(\rho) = \rho + \text{PL}(d_0) + N_{\text{dBm}} \approx \rho + 14.14.
$$
Thus, each $2$ dB increase in the nominal SNR corresponds to a $2$ dB increase in the user maximum transmit power constraint.

\begin{figure}[htbp] 
  \centering 
  \hspace*{-0.4cm}\includegraphics[scale=0.66]{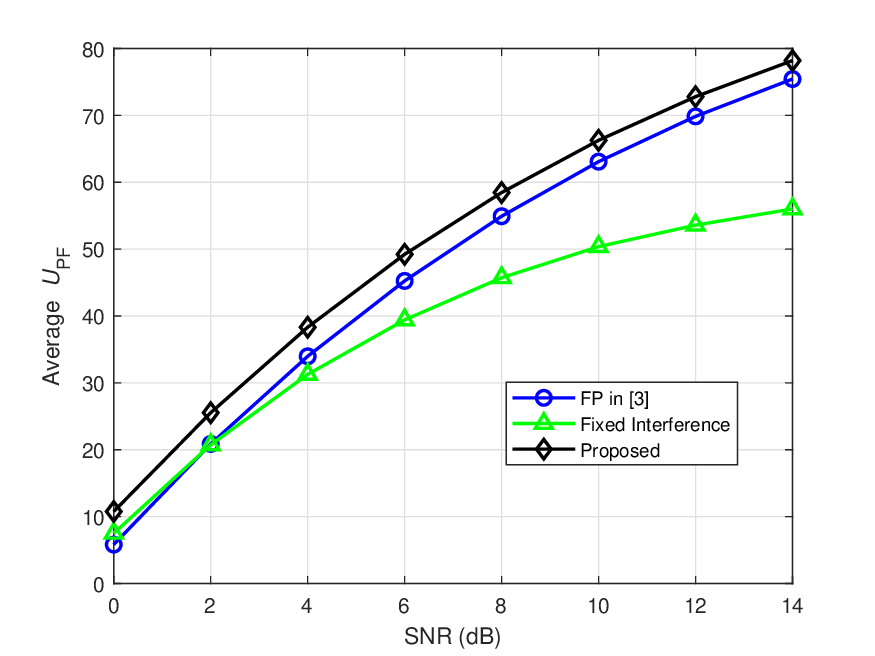} 
  \caption{Average PF utility vs SNR from 0 to 14 under 100 Monte Carlo channel realizations.} 
  \label{fig:SNRchange} 
\end{figure}

For each SNR point, we perform 100 independent Monte Carlo channel realizations, and each realization contains 500 scheduling intervals under the PF metric. Fig. \ref{fig:SNRchange} reports the average PF utility achieved by FP, the Fixed Interference Method, and the proposed SEFP algorithm. The Power Control baseline is omitted from this comparison because it frequently suffers from the premature turning-off phenomenon in the PF setting, where some users may never be scheduled over the whole transmission horizon. This leads to extremely low PF utility and makes it unsuitable as a meaningful PF benchmark.

As shown in Fig. \ref{fig:SNRchange}, the average PF utility of all algorithms increases with the nominal SNR. More notably, the proposed SEFP algorithm consistently achieves the highest PF utility across the entire SNR range. This confirms that the gain of the proposed RIT-based reformulation is not limited to a specific SNR regime. Compared with the Fixed Interference Method, the advantage of the SEFP becomes more evident at medium and high SNRs, where intercell interference plays a more dominant role and the fixed-interference approximation becomes less accurate. Compared with the original FP algorithm in \cite{FPPart2}, the SEFP also maintains a clear and stable performance gain, demonstrating the benefit of replacing the LDT-induced surrogate with the proposed RIT-induced surrogate. Notably, for every tested SNR point, the SEFP attains the highest PF utility in every individual Monte Carlo realization among the compared algorithms. This observation further confirms that the performance gain of the SEFP is not caused by averaging over a few favorable channel realizations, but is consistently observed under all tested SNR conditions.

\section{Conclusion and Discussion}
This paper investigated the joint uplink scheduling and power control problem in a coordinated multicell wireless network. By revisiting the classical LDT from an MM perspective, we showed that its reciprocal-coordinate surrogate can be conservative. Motivated by this observation, we proposed the RIT, which constructs a first-order tight lower bound for the logarithmic rate function while preserving a QT-compatible fractional structure.
Further, we developed the SEFP algorithm, which retains the per-cell separability and closed-form update structure of the classical FP framework, while introducing SINR-adaptive coefficients into the scheduling and power-control metric. The resulting algorithm monotonically improves the original WSR objective and has the same asymptotic per-iteration complexity as the classical closed-form FP algorithm.
Simulation results under the different WSR metrics and different SNR levels demonstrated that the sefp consistently outperforms the classical FP method and other baselines. Future work includes the extensions of the proposed RIT-based framework to MIMO beamforming and other settings that the classical FP well-applied.

\appendices
\section{Proof of Proposition 1}
\begin{enumerate}
    \item By evaluating the LDT surrogate at $r = 0$ and using the elementary inequality $\log(1 + x) < x$ for $x > 0$, we obtain
    $$\ell_{\text{LDT}}(0; \bar{r}) < 0 = f(0), \quad \bar{r} > 0.$$
    Thus, the LDT surrogate fails to match the boundary value of $f(r)$ at $r = 0$.
    \item The second derivatives of $\ell_{\text{LDT}}(r; \bar{r})$ and $f(r)$ with respect to $r$ are
    $$f''(r) = -\frac{1}{(1 + r)^2}, \quad \ell_{\text{LDT}}''(r; \bar{r}) = -\frac{2(1 + \bar{r})}{(1 + r)^3}.$$
    Thus,
    $$\ell_{\text{LDT}}''(\bar{r}; \bar{r}) = 2f''(\bar{r}). $$
    Since $f''(\bar{r}) < 0$, the LDT surrogate is locally more curved, and hence more conservative, around the touching point.
    \item As $r \to \infty$, $f(r)=\log (1+r)\to \infty$ whereas $\ell_{\text{LDT}}(r; \bar{r})\to \log(1 + \bar{r}) + 1 < \infty.$
    This proves that the LDT surrogate saturates to a finite value in the high-ratio regime, whereas $f(r)$ grows unboundedly. 

\end{enumerate}

\section{Proof of Corollary 1} \label{proof_of_corr1}
The surrogate $\ell_{\text{RIT}}(r; \bar{r})$ is well-defined for every $r \ge 0$ and $\bar{r} > 0$. Moreover, it preserves the exact boundary value of $f(r)$ at $r = 0$, that is
\begin{equation}
 \ell_{\text{RIT}}(r=0; \bar{r}) = 0 = \log(1 + 0)=f(r=0).   
\end{equation}
It is also first-order tight at $r = \bar{r}$, since
\begin{equation}
\ell_{\text{RIT}}(r =\bar{r}; \bar{r}) = \log(1 + \bar{r})=f(r=\bar{r}),
\end{equation}
and
\begin{equation}
\left. \frac{\partial \ell_{\mathrm{RIT}}(r; \bar r)}{\partial r} \right|_{r=\bar r} = \frac{1}{1 + \bar r} = \left. \frac{\partial f( r)}{\partial r} \right|_{r=\bar r}.
\end{equation}
Therefore, $\ell_{\text{RIT}}(r ; \bar{r})$ is a valid MM minorization surrogate for $f(r)$.

\section{Proof of Proposition~\ref{RITtightproposition}} \label{proof_of_RITtight}
Since an MM surrogate can be obtained by fixing the auxiliary variables of the corresponding equivalent transform at their optimal values for the current iterate, it inherits the identical functional structure as the transform objective. Moreover, both the LDT+QT objective $F_{\ell q}(\mathbf{x}, \boldsymbol{\gamma}, \mathbf{y})$ in \eqref{Flq} and the RIT+QT objective $F_{rq}(\mathbf{x}, \boldsymbol{\alpha}, \mathbf{y})$ in \eqref{Frq} are separable across $m$. Therefore, it is sufficient to compare the two MM surrogates corresponding to the $m$-th sub-function.

At the $t$-th MM iteration, let $\mathbf{x}^{(t)}$ denote the current iterate. By choosing the reference point as the current value $\bar r_m^{(t)}$ defined in \eqref{ratio for t-th iter}, the resulting LDT+QT surrogate $\ell_{m}^{\text{LQ}}$ of $f(r_m(\mathbf{x}))$ can be expressed as
\begin{equation} \label{l_LQ} 
\begin{aligned}
    &\ell_{m}^{\text{LQ}}(\mathbf{x}; \mathbf{x}^{(t)}) = \\&\log(1 + \bar{r}^{(t)}_m) - \bar{r}^{(t)}_m + 2\sqrt{\bar{r}^{(t)}_mr_m(\mathbf{x})} - \frac{\bar{r}^{(t)}_m}{1 + \bar{r}^{(t)}_m}(1 + r_m(\mathbf{x})).
\end{aligned}
\end{equation}
The resulting RIT+QT surrogate $\ell_{m}^{\text{RQ}}$ of $f(r_m(\mathbf{x}))$ can be expressed as
\begin{equation} \label{l_RQ}
\ell_{m}^{\text{RQ}}(\mathbf{x}; \mathbf{x}^{(t)}) = 2 \log(1 + \bar{r}_m) \sqrt{\frac{r_m(\mathbf{x})}{\bar{r}_m}} - \frac{\bar{r}_m}{1 + \bar{r}_m} - \frac{b(\bar{r}_m)r_m(\mathbf{x})}{\bar{r}_m(1 + \bar{r}_m)}.
\end{equation}

Taking the difference between \eqref{l_RQ} and \eqref{l_LQ}, we have 
\begin{equation}
\ell_{m}^{\text{RQ}}(\mathbf{x}; \mathbf{x}^{(t)}) - \ell_{m}^{\text{LQ}}(\mathbf{x}; \mathbf{x}^{(t)}) = \big[\bar{r}_m - \log(1 + \bar{r}_m)\big] \big(\sqrt{\frac{r_m(\mathbf{x})}{\bar{r}_m}} - 1\big)^2.
\end{equation}
Since
\begin{equation}
\bar{r}_m - \log(1 + \bar{r}_m) > 0, \quad \bar{r}_m > 0,
\end{equation}
we have
\begin{equation}
\ell_{m}^{\text{RQ}}(\mathbf{x}; \mathbf{x}^{(t)}) \geq \ell_{m}^{\text{LQ}}(\mathbf{x}; \mathbf{x}^{(t)}).
\end{equation}
Then, since $\omega_m\ge0$, multiplying the above point-wise inequality by $\omega_m$ and summing over all $m=1,...,M$ yields
\begin{equation}
\sum_{m=1}^{M} \omega_m \ell_{m}^{\text{RQ}}(\mathbf{x}; \mathbf{x}^{(t)})
\geq
\sum_{m=1}^{M} \omega_m \ell_{m}^{\text{LQ}}(\mathbf{x}; \mathbf{x}^{(t)}).
\end{equation}
This proves Proposition~\ref{RITtightproposition}.

\section{Proof of Proposition~\ref{pro_converge}}\label{Proof_of_converge}
The proof relies on the tightness of the RIT and the QT. For fixed $(\mathbf{s},\mathbf{p})$, the RIT gives
$$
W_o(\mathbf{s},\mathbf{p}) = \max_{\boldsymbol\alpha} W_r(\mathbf s,\mathbf p,\boldsymbol\alpha),
$$
where the maximum is attained by the update in \eqref{RIT_av_update}. For fixed $(\mathbf s,\mathbf p,\boldsymbol\alpha)$, the QT gives
$$
W_r(\mathbf s,\mathbf p,\boldsymbol\alpha) = \max_{\mathbf y} W_{rq}(\mathbf s,\mathbf p,\boldsymbol\alpha,\mathbf y),
$$
where the maximum is attained by the update in \eqref{QT_av_update}.

Let $(\mathbf s^{(t)}, \mathbf p^{(t)})$ be the variables at the end of the $t$-th iteration, and let $\boldsymbol\alpha^{(t)}$ and $\mathbf y^{(t)}$ be the corresponding optimal auxiliary variables. Also let $\bar{\mathbf y}^{(t)}$ be the optimal QT auxiliary variable evaluated at $(\mathbf s^{(t+1)}, \mathbf p^{(t+1)}, \boldsymbol\alpha^{(t)})$. Then,
\begin{align*}
W_o(\mathbf s^{(t+1)}, \mathbf p^{(t+1)}) 
&= W_r(\mathbf s^{(t+1)}, \mathbf p^{(t+1)}, \boldsymbol\alpha^{(t+1)}) \\
&\geq W_r(\mathbf s^{(t+1)}, \mathbf p^{(t+1)}, \boldsymbol\alpha^{(t)}) \\
&= W_{rq}(\mathbf s^{(t+1)}, \mathbf p^{(t+1)}, \boldsymbol\alpha^{(t)}, \bar{\mathbf y}^{(t)}) \\
&\geq W_{rq}(\mathbf s^{(t+1)}, \mathbf p^{(t+1)},\boldsymbol \alpha^{(t)}, \mathbf y^{(t)}) \\
&\geq W_{rq}(\mathbf s^{(t)}, \mathbf p^{(t)}, \boldsymbol\alpha^{(t)}, \mathbf y^{(t)}) \\
&= W_r(\mathbf s^{(t)}, \mathbf p^{(t)}, \boldsymbol\alpha^{(t)}) \\
&= W_o(\mathbf s^{(t)}, \mathbf p^{(t)}).
\end{align*}
Therefore, $W_o$ is monotonically non-decreasing after each iteration. Since the value of $W_o$ (represents the network WSR) is bounded above, the convergence is then proved.

\newpage

\vfill

\end{document}